\documentclass[12pt]{article}
\usepackage{authblk}
\usepackage[UKenglish]{babel}
\usepackage[UKenglish]{isodate}
\usepackage[utf8]{inputenc}
\usepackage{enumitem}
\usepackage{amsmath,amsthm,amsfonts,amssymb}
\usepackage{tikz}
\usepackage{tikz-cd}
\usepackage{stmaryrd}  
\usepackage{a4wide}
\usetikzlibrary{positioning,decorations.pathmorphing,shapes,calc}
\usepackage{hyperref}
\usepackage{natbib}
\usepackage{mathrsfs} 
\usepackage{proof}

\providecommand{\noopsort}[1]{}

\setlength\marginparwidth{6em}

\newcommand*\cocolon{%
        \nobreak
        \mskip6mu plus1mu
        \mathpunct{}%
        \nonscript
        \mkern-\thinmuskip
        {:}%
        \mskip2mu
        \relax
}

\makeatletter
\def\thanks#1{\protected@xdef\@thanks{\@thanks
        \protect\footnotetext{#1}}}
\makeatother

\theoremstyle{plain}
\newtheorem{theorem}{Theorem}[section]
\newtheorem{proposition}[theorem]{Proposition}

\newtheorem{corollary}[theorem]{Corollary}

\newtheorem*{claim*}{Claim}

\theoremstyle{definition}
\newtheorem{definition}[theorem]{Definition}


\newcommand{\mono}{\hookrightarrow} 
\newcommand{\epi}{\twoheadrightarrow} 

\newcommand{\id}{\mathrm{id}}

\newcommand{\Sg}{\ensuremath{\sigma}}
\newcommand{\FO}{\ensuremath{\mathrm{FO}}}
\newcommand{\FOA}{\ensuremath{\mathrm{FO}_\omega}}

\newcommand{\Mod}{\ensuremath{\mathrm{Mod}}}
\newcommand{\ModA}{\ensuremath{\mathrm{Mod}_\omega}}

\newcommand{\TypA}{\ensuremath{\mathrm{Typ}_\omega}}
\newcommand{\Fin}{\ensuremath{\mathrm{Fin}}}

\newcommand{\F}{\ensuremath{\mathcal{F}}}
\newcommand{\M}{\ensuremath{\mathcal{M}}}
\newcommand\MA{\ensuremath{\mathbb{M}}}
\newcommand{\V}{\ensuremath{\mathcal{V}}}
\renewcommand{\P}{\ensuremath{\mathcal{P}}}
\newcommand{\G}{\ensuremath{\mathbf{\Gamma}}}
\newcommand\indi{\ensuremath{\delta}} 

\newcommand{\two}{\ensuremath{\mathbf{2}}}

\newcommand{\w}{\widehat}


\newcommand{\N}{\mathbb{N}}

\newcommand\SP[1]{\ensuremath{\left<#1\right>}}
\newcommand\ARG{\mkern1.5mu\text{-}\mkern1.5mu}
\newcommand\PrG[1]{\mathbf{p}_{\geq #1}\,} 
\newcommand\PrL[1]{\mathbf{p}_{< #1}\,}
\newcommand{\mm}{^{-}} 
\newcommand{\cc}{^\circ} 

\renewcommand{\S}{\ensuremath{\mathbf{S}}}
\newcommand{\Set}{\ensuremath{\mathbf{Set}}}
\newcommand{\BStone}{\ensuremath{\mathbf{BStone}}}
\newcommand{\BA}{\ensuremath{\mathbf{BA}}}
\newcommand{\op}{\ensuremath{\mathrm{op}}}
\newcommand{\Con}{\ensuremath{\mathbf{Con}}}
\newcommand{\Posf}{\ensuremath{\mathbf{Pos}_f}}

\newcommand\NOdM{Ne\v{s}et\v{r}il and Ossona de Mendez}

\renewcommand{\phi}{\varphi}
\renewcommand{\epsilon}{\varepsilon}
\newcommand{\sem}[1]{\ensuremath{\llbracket #1 \rrbracket}}

\newcommand\inv{^{-1}}

\newcommand\qtq[1]{{\quad\text{#1}\quad}}
\newcommand\ete[1]{{\enspace\text{#1}\enspace}}
\newcommand\ee[1]{{\enspace#1\enspace}}

\newcommand\p[1]{\ensuremath{\mathcal #1}}


\title{A Cook's tour of duality in logic: from quantifiers, through Vietoris, to measures\thanks{This project has been funded by the European Research Council (ERC) under the European Union's Horizon 2020 research and innovation program (grant agreement No.670624). 
Tom\'a\v s Jakl has received partial support from the EPSRC grant EP/T007257/1.
Luca Reggio has received funding from the European Union's Horizon 2020 research and innovation programme under the Marie Sk{\l}odowska-Curie grant agreement No.837724.}}
\author[1]{Mai Gehrke}
\author[1,2]{Tom\'a\v s Jakl}
\author[3]{Luca Reggio}
\affil[1]{CNRS and Universit{\'e} C{\^o}te d'Azur, Nice, France}
\affil[2]{Department of Computer Science and Technology, University of Cambridge, UK}
\affil[3]{Department of Computer Science, University of Oxford, UK}
\setcounter{Maxaffil}{0}

\date{}

\begin{document}

\maketitle

\vspace{-2em}
\abstract{We identify and highlight certain landmark results in Samson Abramsky's work which we believe are fundamental to current developments and future trends. 
In particular, we focus on the use of 
\begin{itemize}
\item topological duality methods to solve problems in logic and computer science;
\item category theory and, more particularly, free (and co-free) constructions;
\item these tools to unify the `power' and `structure' strands in computer science.
\end{itemize}
}

\section{Algebras from logic}
\label{s:algebras-from-logic}

Boole wanted to view propositional logic as arithmetic. This idea, of seeing logic as a kind of algebra, reached a broader and more foundational level with the work of Tarski and the Polish school of algebraic logicians. The basic concept is embodied in what is now known as the \emph{Lindenbaum-Tarski algebra} of a logic. In the classical cases, this algebra is obtained by quotienting the set of all formulas \p F by logical equivalence, that is,
\[ \p L = \p F/_{\approx} \qtq{where} \phi \approx \psi \ete{if, and only if,} \phi \ete{and} \psi \ete{are logically equivalent.}\]
When the equivalence relation $\approx$ is a congruence for the connectives of the logic, \p L may be seen as an algebra in the signature given by the connectives. This is the case for many propositional logics as well as for first-order logic. There is, however, a fundamental difference in how well this works at these two levels of logic. 

For example, for Classical Propositional Logic (CPL), Intuitionistic Propositional Calculus (IPC) and modal logics, the Lindenbaum-Tarski algebra is the \emph{free algebra} over the set of primitive propositions of the appropriate variety. In the above mentioned cases, these are Boolean algebras, Heyting algebras, and modal algebras of the appropriate signature, respectively.  
Further, for algebras in these varieties, congruences are given by the equivalence classes of the top elements which, logically speaking, are the theories of the corresponding logics. Consequently, we have that the Lindenbaum-Tarski algebras of theories, in which one quotients out by logical equivalence modulo the theory, account for the full varieties of Boolean algebras, Heyting algebras and modal algebras.

The picture is not always quite this simple, even at the propositional level. E.g.\ the Lindenbaum-Tarski algebra of positive propositional logic (i.e.\ the fragment of CPL without negation, which we will denote PPL) is indeed the free bounded distributive lattice over the set of primitive propositions. However, since there are lattices with multiple congruences giving the same filter, we do not have the same natural correspondence between the full variety of distributive lattices and the theories of PPL.  This sort of problem can be dealt with and this is the subject of the far-reaching theory of Abstract Algebraic Logic, see \citep{Font91} for the example of PPL.

\vspace{1em}
Let us now consider (classical) first-order logic. Here also, logical equivalence is a congruence for the logical connectives. We have the Boolean connectives, and unary connectives $\exists x$ and $\forall x$, a pair for each individual variable $x$ of the logical language.\footnote{Typically one also considers some named constants, which we are not mentioning here.} The latter give rise to pairs of unary operations that are inter-definable by conjugation with negation. Thus, in the Boolean setting, it is enough to consider the  $\exists x$ operations. These are (unary) \emph{modal operators}.
 
In its most basic form, modal propositional logic corresponds to the variety of modal algebras (MAs), which are Boolean algebras augmented by a unary operation that preserves finite joins. The algebraic approach is a powerful tool in the study of modal logics, see e.g.\ \citep{RWZ06} for a survey. In particular, the Lindenbaum-Tarski algebra for this logic is the free modal algebra over the propositional variables, the normal modal logic extensions correspond to the subvarieties of the variety of MAs, and theories in these logics correspond to the individual algebras in the corresponding varieties.

The Lindenbaum-Tarski algebra of first-order formulas modulo logical equivalence is a multimodal algebra, with modalities $\Diamond_x$, one for each variable $x$ in the first-order language.  These modalities satisfy some equational properties such as\footnote{Throughout, if no confusion arises, we write $\phi$ for the corresponding element of the Lindenbaum-Tarski algebra, i.e.\ the logical equivalence class $[\phi]_{\approx}$ of the formula $\phi$.} 
\[
\varphi\leq\Diamond_x\varphi\quad\quad \Diamond_x (\varphi \wedge \Diamond_x \psi) = \Diamond_x \varphi\wedge\Diamond_x\psi \quad\quad \Diamond_x \Diamond_y \varphi = \Diamond_y \Diamond_x\varphi.
\]
A fundamental problem, as compared with the propositional examples given above, is that these Lindenbaum-Tarski algebras \emph{are not free} in any reasonable setting. Tarski and his students introduced the variety of cylindric algebras of which these are examples, see \citep{Mo86} for an overview. However, not all cylindric algebras occur as Lindenbaum-Tarski algebras for first-order theories.  For one, when we have an infinite set of variables, and thus of modalities, for every element $\varphi$ in the algebra there is a finite set $V_\varphi$ of variables such that $\Diamond_x\varphi = \varphi$ for all $x\not\in V_\varphi$.

Even though cylindric algebras have been extensively studied, little is known specifically about the ones arising as Lindenbaum-Tarski algebras of first-order theories. A notable exception is the paper \citep{Myers76} characterising the algebras for first-order logic over empty theories. Another important insight, due to Rasiowa and Sikorski, is the fact that the completeness theorem for first-order logic may be obtained using the Lindenbaum-Tarski construction \citep{RS50}. Their proof uses the famous Rasiowa-Sikorski Lemma. This lemma, which may be seen as a consequence of the Baire Category Theorem in topology, states that, given a specified countable collection of subsets with suprema in a Boolean algebra, one can separate the elements of the Boolean algebra with ultrafilters that are inaccessible by these suprema.
 
\vspace{1em}
The lack of freeness of the Lindenbaum-Tarski algebras of first-order logic is overcome by moving from lattices with operators to categories and categorical logic. 
In the equational setting, algebraic theories can equivalently be described as Lawvere theories, i.e.\ categories with finite products and a distinguished object $X$ such that every object is a finite power of $X$.\footnote{For a variety of algebras $\mathscr{V}$, the associated Lawvere theory is the \emph{dual} of the category of finitely generated free $\mathscr{V}$-algebras with homomorphisms; the distinguished object is the free algebra on one generator.} Similarly, theories in a given fragment of first-order logic correspond to a certain class of categories. 

For instance, theories in the positive existential fragment of first-order logic, also called coherent theories, correspond to coherent categories.
Every coherent theory $T$ yields a coherent category, the \emph{syntactic category} of $T$, which may be seen as a generalisation of the Lindenbaum-Tarski construction, and which is free in an appropriate sense. Central to this construction is the fundamental insight, of Lawvere, that quantifiers are adjoints to substitution maps. Thus, existential quantifiers are encoded in coherent categories as lower adjoints to certain homomorphisms between lattices of subobjects. Further, there is some sense in which the correspondence between theories and quotients is regained (at the level of so-called classifying toposes of the theories). See \citep{MR1977}. 
Other fragments of first-order logic can be dealt with in a similar fashion, e.g.\ intuitionistic first-order theories correspond to Heyting categories, and classical first-order theories to Boolean coherent categories. See \citep{ElephantV2} for a thorough exposition.

To make the relation between syntactic categories and Lindenbaum-Tarski algebras more explicit, we recall the notion of Boolean hyperdoctrines, tightly related to Boolean coherent categories.
Consider the category $\Con$ of contexts and substitutions. A context is a finite list of variables $\overline{x}$, and a substitution from $\overline{x}$ to a context $\overline{y}=y_1,\ldots,y_n$ is a tuple $\langle t_1,\ldots, t_n\rangle$ of terms with free variables in $\overline{x}$. Given a first-order theory $T$, let $P(\overline{x})$ be the Lindenbaum-Tarski algebra of first-order formulas with free variables in $\overline{x}$, up to logical equivalence modulo $T$. A substitution $\langle t_1,\ldots, t_n\rangle\colon \overline{x}\to\overline{y}$ induces a Boolean algebra homomorphism $P(\overline{y})\to P(\overline{x})$ sending a formula $\phi(\overline{y})$ to $\phi(\langle t_1,\ldots, t_n\rangle/\overline{y})$.\footnote{More precisely, the morphisms in $\Con$ are defined as equivalence classes of substitutions, by identifying two tuples $\langle s_1,\ldots, s_n\rangle$ and $\langle t_1,\ldots, t_n\rangle$ if they give rise to the same homomorphism.} This yields a functor 
\begin{equation*}
P\colon \Con^\op\to \BA.
\end{equation*}
The product projection $\pi_y\colon\overline{x},y\to \overline{x}$ in $\Con$ induces the Boolean algebra embedding $P(\pi_y)\colon P(\overline{x})\hookrightarrow P(\overline{x},y)$, which admits both lower and upper adjoints:
\begin{gather*}
\exists_{y}\dashv P(\pi_y), \ \ \exists_{y}(\phi(\overline{x},y))=\exists y.\phi(\overline{x},y), \\
P(\pi_y)\dashv \forall_{y}, \ \ \forall_{y}(\phi(\overline{x},y))=\forall y.\phi(\overline{x},y).
\end{gather*}
This accounts for the \emph{Boolean hyperdoctrine} structure of $P$. The syntactic category of the theory $T$ can be obtained from $P$ by means of a 2-adjunction between Boolean hyperdoctrines and Boolean categories, cf.\ \citep{Pitts1983} or \citep[Chapter~5]{Coumans2012}.

While the categorical perspective solves a number of problems, it is not easily amenable to the inductive point of view that we want to highlight here. We will get back to this in Section~\ref{s:three-ex-spaces}.

\section{Topological methods in logic}
\label{s:top-methods-in-logic}

Topological methods in logic have their origin in the work of M.\ H.\ Stone. The paper \citep{Stone1936} established what is nowadays presented as a dual equivalence between the category $\BA$ of Boolean algebras with homomorphisms and a full subcategory $\BStone$ of the category of topological spaces with continuous maps. The objects of $\BStone$ are the so-called \emph{Boolean (Stone) spaces}, i.e.\ compact Hausdorff spaces whose collection of \emph{clopen} (simultaneously closed and open) subsets forms a basis for the topology. Usually referred to as \emph{Stone duality for Boolean algebras}, this is the prototypical example of a dual equivalence induced by a dualizing object, i.e.\ an object sitting at the same time in two categories. In fact, the quasi-inverse functors providing the equivalence between $\BA^\op$ and $\BStone$ are given by enriching the set of homomorphisms into the appropriate structure on the two-element set $\two=\{0,1\}$, which can be seen either as the two-element Boolean algebra or as the two-element Boolean space when equipped with the discrete topology. 

Given a Boolean algebra $B$, the space $X_B$ obtained by equipping the set of homomorphisms
\[
\hom_{\BA}(B,\two)
\]
with the subspace topology induced by the product topology on $\two^B$ is a Boolean space, the \emph{(Stone) dual space} of $B$. Under the correspondence sending a Boolean algebra homomorphism $h\colon B\to\two$ to the subset $h^{-1}(1)\subseteq B$, the points of $X_B$ can be identified with the \emph{ultrafilters} on $B$. In logical terms, these  are the complete consistent theories over $B$. 
Conversely, given a Boolean space $X$, the set of continuous~maps
\[
\hom_{\BStone}(X,\two)
\]
forms a Boolean subalgebra $B_X$ of the product algebra $\two^X$, where $\two$ is now viewed as a Boolean algebra. When equipped with the induced Boolean operations, $B_X$ is called the \emph{dual algebra} of $X$. Upon identifying a continuous function $f\colon X\to\two$ with the clopen subset $f^{-1}(1)\subseteq X$, the Boolean algebra $B_X$ can be described as the field of clopen subsets of $X$ with the set-theoretic Boolean operations. Stone duality states that these object assignments extend to functors, and there are isomorphisms $B\cong B_{X_{B}}$ and $X\cong X_{B_{X}}$ (natural in $B$ and $X$, respectively). Throughout, the element of $B_{X_{B}}$ corresponding to $a\in B$ will be denoted by $\w{a}$.

Shortly after his seminal work in 1936, Stone generalised the duality to bounded distributive lattices \citep{Stone1938}; there, the relevant category of spaces consists of spectral spaces with perfect maps. A different formulation of the duality for distributive lattices, induced by the dualizing object $\two$ regarded either as a lattice or as a discrete \emph{ordered} space where $0<1$, was later introduced in \citep{Priestley1970}.

When combined with the algebraic semantics, as outlined in the previous section, Stone duality yields a powerful framework for developing and applying topological methods in logic.
The potential advantages of applying duality are of two types. For one, duality theory often connects syntax and semantics. To wit, in the case of CPL, the Lindenbaum-Tarski algebra is the free Boolean algebra on the set $V$ of propositional variables, and its dual space is the Cantor space $\two^V$ of all valuations over $V$. The second type of advantage is that it \emph{often is easier}, technically, to solve a problem on the dual side.

The use of duality is not restricted to the Boolean setting. Indeed, generalisations and extensions of Stone duality have been exploited to study fragments and extensions of CPL. Many other special cases have since been developed based on Stone's and Priestley's dualities for bounded distributive lattices (corresponding to PPL). Here we just mention the duality for Heyting algebras, the algebraic semantics of IPC, mainly developed by Leo Esakia \citep{Esakia1974,Esakia2019}.
Stone duality was also extended by J\'onsson and Tarski to Boolean algebras with operators by introducing the powerful framework of canonical extensions \citep{JT1,JT2}. This was a crucial step for many applications, e.g.\ in modal logic.

\vspace{1em}
In theoretical computer science, the link between syntax and semantics provided by Stone-type dualities is particularly central as the two sides correspond to specification languages and to spaces of computational states, respectively. The ability to translate faithfully between these two worlds has often proved itself to be a powerful theoretical tool as well as a handle for solving problems. A prime example is Abramsky's seminal work \citep{Abramsky87,Abramsky91} linking program logic and domain theory via Stone duality for bounded distributive lattices,  which was awarded the IEEE LICS ``Test of Time'' Award in 2007. Other examples include large parts of modal and intuitionistic logics, where J\'onsson-Tarski duality yields Kripke semantics \citep{BlackburnDeRijkeVenema2001}. For a particular example, see  Ghilardi's work in modal and intuitionistic logic on unification \citep{Ghi2004} and normal forms \citep{Ghi1995}. 

By contrast, Stone duality has not played a significant role, at least overtly, in more algorithmic areas of theoretical computer science until recently. In the theory of regular languages,  finite and profinite monoids are an important tool, in particular for proving decidability, ever since their introduction in the 1960s and 1980s, respectively, see \citep{Pin09} for a survey.  While it was observed as early as 1937 by Birkhoff that profinite topological algebras are based on Boolean spaces \citep{Birkhoff1937}, the connection with Stone duality was not used in automata theory until much more recently. It was exploited first in an isolated case by \citep{Pippenger97}, and then more structurally by \citep{GGP2008}. Further, realising that these methods are instances of Stone duality provides an opportunity to generalise them to the setting of computational complexity and the search for lower bounds \citep{GK2017}. This line of work connects tools from semantics, such as Stone duality, with problems and methods on the algorithmic side of computer science, such as decidability and Eilenberg-Reiterman theory. Similarly, recent work of Samson Abramsky and co-workers connects categorical tools from semantics, such as comonads, with concepts from finite model theory, such as tree-width and tree-depth \citep{Abramsky2017b, AbramskyShah2018}.

Finite model theory, computational complexity theory and the theory of regular languages all belong to the branch of computer science where the use of resources in computing is the main focus, whereas category theory and Stone duality have long been central tools in semantics of programming languages. While the trend of making connections and seeking unifying results that bridge the gap between semantics and algorithmic issues has long been on the way (e.g.\ in the form of semantic work on resource sensitive logics), making this overt and placing it front and center stage is a recent phenomenon in which Samson Abramsky has played a central role. In particular, one may mention the 2017 semester-long program at The Simons Institute for the Theory of Computing on Logical Structures in Computation of which he was a co-organiser, and the ensuing work and ongoing project with Anuj Dawar focussing on bridging what they aptly call the \emph{Structure versus Power} gap in theoretical computer science. The 2014 ERC project Duality in Formal Languages and Logic -- a unifying approach to complexity and semantics (DuaLL), in which our recent work has taken place, shares these goals. 

In Section~\ref{s:modal-Vietoris}, we highlight some of the ideas and concepts from Samson Abramsky's work in semantics that are playing an important role in our recent work on the DuaLL project, which we will describe in Section~\ref{s:Quant}. In Section~\ref{s:three-ex-spaces}, we briefly review two settings from logic pertinent to our work, and give a duality-centric description of the treatment of the function space construction in Abramsky's Domain Theory in Logical Form. This allows us to make a connection to the profinite methods in automata theory.

\subsection{Modal logic and the Vietoris functor}\label{s:modal-Vietoris}
An important contribution of Samson Abramsky's is to use the duality between syntax and semantics, \emph{combined with a step-wise description of connectives} in logic applications. This phenomenon is the driving force behind his sweeping and elegant general solution to domain equations in the paper Domain Theory in Logical Form (DTLF), \citep{Abramsky91}. We will get back to this with a few more details in Section~\ref{s:three-ex-spaces}. In  \citep{Abramsky1988}, which is the published version of various talks given during the genesis of DTLF, Abramsky gives a simpler example of this general idea. The setting is non-well-founded sets, and the object he considers is the free modal algebra (over the empty set). Other early uses of similar methods are due to Ghilardi \citep{Ghilardi1992,Ghi1995}. Subsequently, the treatment of the free modal algebra given in Abramsky's talks, in particular his talk at the 1988 British Colloquium on Theoretical Computer Science in Edinburgh, has been identified as an important contribution to modal logic in its own right, see e.g.\ \citep{Ruttenetal93,KuKuVe2004,VV2014}, and it is also very pertinent to the duality theoretic treatment of quantifiers which we will discuss in Section~\ref{s:Quant}.

The step-wise description of an algebra from a set of generators is what is often called \emph{Noetherian induction} in algebra and \emph{induction on the complexity of a formula} in logic: The algebra is generated layer by layer, starting with the generators --- which are said to be of rank $0$ --- by adding consecutive layers of the operations to obtain higher rank elements. Also, instead of doing this with all the operations, we may do it relative to a fragment. In the case of modal algebras, for example, we may consider as rank $0$ all Boolean combinations of generators, rank less than or equal to $1$ any element which may be expressed as a Boolean combination of rank $0$ and diamonds of rank $0$ elements, and so on. This is a fine tool for the purpose of induction, but it is not a good tool for constructing algebras in general. However, if the operation is freely added modulo some equations which are of pure rank $1$, then it is in fact a powerful method of \emph{construction}. This is exactly the situation for free modal algebras, which are Boolean algebras with an additional operation satisfying the equations
\[   \Diamond 0 \approx 0 \qtq{and} \Diamond(x\vee y) \approx \Diamond x \vee \Diamond y. \]
These equations are both of pure rank $1$. That is, in each equation, all occurrences of each variable are in the scope of exactly \emph{one} layer of modal operators. 

From a categorical point of view, one may see algebras in a variety as Eilenberg-Moore algebras for a finitary monad, but having a pure rank $1$ axiomatisation means that these are also presentable as the \emph{algebras for an endofunctor}, see \citep{KurzRosicky12} where this is studied in greater generality. In the case of MAs, define the endofunctor $\MA$ on Boolean algebras which takes a Boolean algebra $B$ to the Boolean algebra freely generated by elements $\Diamond a$, for every $a\in B$, subject to the equations for modal algebras viewed as relations on these generators:
\[   \Diamond 0 \approx 0 \qtq{and} \Diamond(a\vee b) \approx \Diamond a \vee \Diamond b \quad(\forall a,b\in B).\]
Then $B$, equipped with a unary operation $f\colon B\to B$, is a modal algebra if and only if the map $\Diamond a\mapsto f(a)$ extends to a Boolean algebra homomorphism $h\colon \MA(B)\to B$. It also follows that the free modal algebra over a Boolean algebra $B$ may be \emph{constructed inductively}, as the colimit of the  sequence
\[\begin{tikzcd}
B_0 \arrow[hookrightarrow]{r}{i_0} & B_1 \arrow[hookrightarrow]{r}{i_1} & B_2\arrow[hookrightarrow]{r}{i_2} & \dots 
\end{tikzcd}\]
where $B_0=B$, $B_{n+1}$ is the coproduct $B\oplus\MA(B_n)$, the map $i_0$ is the embedding of $B$ in the coproduct, and $i_{n+1}=\id_{B} \oplus \MA(i_n)$. Note that, if $B$ is finite, then so are all the algebras in the sequence. Moreover, if we start with the free Boolean algebra on a set $V$, then the colimit of the sequence is the free modal algebra over $V$, and $B_n$ is the Boolean subalgebra consisting of all formulas of rank at most $n$. 
 
Further, we may of course dualize $\MA$ to get a functor on $\BStone$ and a co-inductive description of the dual of free modal algebras. This dual endofunctor is the Vietoris functor. Recall that, given a Boolean space $X$, the \emph{Vietoris hyperspace of $X$} is the collection $\V(X)$ of closed subsets of $X$ equipped with the topology generated by the sets of the form
\[ 
\Diamond U = \{ C \in \V(X) \mid C \cap U \neq \emptyset \} \qtq{and} (\Diamond U)^c 
\]
for $U$ a clopen subset of $X$. With respect to this topology, $\V(X)$ is again a Boolean space. See \citep{Vietoris1923,Michael1951}. Furthermore, for every continuous map $f\colon X\to Y$, the forward-image map $f(\ARG)\colon \V(X)\to\V(Y)$ is continuous. Hence, we obtain a functor
\[
\V\colon \BStone\to\BStone.
\]
Abramsky showed that the dual Stone space of the free modal algebra on no generators coincides with the final coalgebra for the functor $\V$. In general, the dual of the sequence of embeddings given above is
\[\begin{tikzcd}[column sep=4.6em]
X \arrow[twoheadleftarrow]{r}{\pi_X} &  X\times\V(X)=X_1 \arrow[twoheadleftarrow]{r}{\ \id_X\times\V(\pi_X)} & X\times\V(X_1)=X_2  \arrow[twoheadleftarrow]{r} & \dots
\end{tikzcd}\]
This result provides also a coalgebraic perspective on the duality between modal algebras and descriptive general Kripke frames. As such, it has had a strong influence on the very active coalgebraic approach to modal logic. The Vietoris hyperspace construction also appeared earlier in modal logic in the work (published in Russian) of Leo Esakia, cf.\ \citep{Esakia1974}. See also \citep{Esakia2019} for the recent English translation of Esakia's 1985 book.

\subsection{Three examples of dual spaces in logic}\label{s:three-ex-spaces}
In this section we discuss duality methods in logic in three settings: classical first-order logic, B{\"u}chi's logic on words, and Domain Theory in Logical Form. 

\paragraph{First-order logic and spaces of types.}
For classical first-order logic, the dual space of the Lindenbaum-Tarski algebra of formulas is fairly easy to describe. Fix a countably infinite set of first-order variables $v_1,v_2,\ldots$ and a first-order signature $\Sg$, i.e.\ $\Sg$ may contain relation symbols as well as function symbols and constants. Denote by $\FOA$ the set of all first-order formulas in the signature $\Sg$ over the set of variables. Given a theory $T$, that is, any set of first-order sentences in the signature $\Sg$, consider the collection
\[
\ModA(T)=\{(A,\alpha\colon \omega\to A)\mid A \ \text{is a $\Sg$-structure and} \ A\models T\}
\]
of models of $T$ equipped with an assignment of the variables. 
The satisfaction relation ${\models}\subseteq \ModA\times\FOA$ induces the equivalence relations of elementary equivalence and logical equivalence on these sets, respectively:
\[
(A,\alpha)\equiv (A',\alpha') \ \ \text{ iff } \ \ \forall \phi\in\FOA \ \ A,\alpha\models \phi \ \Longleftrightarrow \ A',\alpha'\models \phi
\]
and
\[
\phi\approx \psi \ \ \text{ iff } \ \ \forall (A,\alpha)\in \ModA(T) \ \ A,\alpha\models \phi \ \Longleftrightarrow \ A,\alpha\models \psi.
\]
The quotient $\FOA(T)=\FOA/{\approx}$, i.e.\ the Lindenbaum--Tarski algebra of $T$, carries a natural Boolean algebra structure. On the other hand, $\TypA(T)=\ModA/{\equiv}$ is naturally equipped with a topology, generated by the sets
\[
\sem{\phi}=\{[(A,\alpha)]\mid A,\alpha\models \phi\}
\]
for $\phi\in \FOA$, and is known as the \emph{space of types} of $T$. G{\"o}del's completeness theorem may now be stated as follows:
\begin{center}
the space $\TypA(T)$ is the Stone dual of $\FOA(T)$.
\end{center}
For every $n\in\N$, we can consider the Boolean subalgebra $\FO_n(T)$ of $\FOA(T)$ consisting of the equivalence classes of formulas with free variables in $v_1,\ldots,v_n$. The dual space of $\FO_n(T)$ is then the space of $n$-types of $T$. In particular, for $n=0$, we see that the dual space of the Lindenbaum-Tarski algebra of sentences $\FO_0(T)$ is the space of elementary equivalence classes of models of $T$.

Methods based on spaces of types play a central role in model theory. Their use can be traced back to Tarski's work, but the functorial nature of the construction was brought out and exploited nearly thirty years later by Morley in \citep{Morley1974}. In fact, it has been suggested that the notion of type space may be more fundamental than the notion of model \citep{Macintyre2003}. This point of view is related to the categorical approach, as the type space functor of a theory $T$ can be essentially identified with the (pointwise) dual of the hyperdoctrine associated with $T$.

This approach relies on the presentation of the algebra $\FOA(T)$ as the colimit of the following diagram of Boolean algebra embeddings:
\[\begin{tikzcd}
\FO_0(T) \arrow[hookrightarrow]{r} & \FO_1(T) \arrow[hookrightarrow]{r} & \FO_2(T) \arrow[hookrightarrow]{r} & \dots
\end{tikzcd}\]
Interestingly, this presentation does not fit with the inductive treatment of modal logic in Section~\ref{s:modal-Vietoris}, as the sentences, which is what we want to understand, belong to all the algebras in the chain. If we want to construct the Lindenbaum-Tarski algebra $\FOA(T)$ inductively, by adding a layer of quantifier $\exists$ at each step, we should start from the Boolean subalgebra $\FO^0(T)$ of $\FOA(T)$ consisting of the \emph{quantifier-free} formulas. The algebra $\FO^0(T)$ sits inside the algebra $\FO^1(T)$ of formulas with quantifier rank at most $1$, and so forth. The colimit of the diagram
\[\begin{tikzcd}
\FO^0(T) \arrow[hookrightarrow]{r} & \FO^1(T) \arrow[hookrightarrow]{r} & \FO^2(T) \arrow[hookrightarrow]{r} & \dots
\end{tikzcd}\]
is again the algebra $\FOA(T)$. In Section~\ref{s:Quant}, we will illustrate how the inductive methods used in B{\"u}chi's logic apply in the general first-order setting (and beyond) using the ideas set forth in Section~\ref{s:modal-Vietoris}.

\paragraph{B{\"u}chi's logic on words and profinite monoids.}
The connection between logic and automata goes back to the work of B{\"u}chi, Elgot, Rabin and others in the 1960s. In particular, B{\"u}chi's logic on words provides a powerful tool for the study of formal languages. The basic idea consists in regarding words on a finite alphabet $A$, i.e.\ elements of the free monoid $A^*$, as finite models for so-called \emph{logic on words}. 
That is, a word $w\in A^*$ is seen as a relational structure on the initial segment of the natural numbers \[\{1,\ldots, |w|\},\] where $|w|$ is the length of $w$, equipped with a unary relation $P_a$ for each $a\in A$ which singles out the positions in $w$ where the letter $a$ appears. 
B{\"u}chi's theorem states that the Lindenbaum-Tarski algebra of monadic second-order sentences for logic on words with the successor relation (interpreted over finite words) is isomorphic to the Boolean subalgebra of $\P(A^*)$ consisting of the regular languages \citep{Buchi1966}.

Since we are beyond first-order logic, and we have restricted to the finite models, the dual of the Lindenbaum-Tarski algebra is \emph{not} $A^*$, i.e.\ the collection of (elementary equivalence classes of) finite models. For the FO fragment of logic on words we can identify the dual with a space of models provided we allow for pseudofinite words. See e.g.\ \citep{vanGoolSteinberg}. However, this is not the case for monadic second-order logic and duality guides the right choice for the space of generalised models as the dual of the Lindenbaum-Tarski algebra. The latter coincides with (the underlying space of) the profinite completion $\w{A^*}$ of the monoid $A^*$, or equivalently, the free profinite monoid on the set $A$. 

The observation that the space underlying the free profinite monoid is the dual of the Boolean algebra of languages recognised by finite monoids essentially goes back to \citep{Birkhoff1937}, and was rediscovered by Almeida in the setting of automata theory \citep{Almeida1989}. Further, the fact that the monoid multiplication of $\w{A^*}$ also arises from duality for Boolean algebras with operators as the dual of certain quotienting operations on regular languages was shown in \citep{GGP2008}.

This type space tells us what generalised models for these logics should be, namely the points of the free profinite monoids. The realisation that these are an important tool in automata theory came in the 1980s \citep{Reiterman1982,Almeida94}. However, it was introduced, not via logic and duality, but rather via the connection between automata and finite semigroups, where the multiplication available on the profinite monoid also plays a fundamental role.

 An essential insight in the proof of B{\"u}chi's theorem is the fact that every monadic second-order formula is equivalent on words to an existential monadic second-order formula, and thus the iterative approach is not relevant as the hierarchy collapses. See \citep{GhivGo16} for a duality and type-theoretic approach via model companions. However, for the first-order fragment the iterative approach is very powerful. The first, and still prototypical application, is Sch{\"u}tzenberger's theorem which applies an iterative method, similar to the one of Section~\ref{s:modal-Vietoris}, to characterise the first-order fragment via duality. To be more precise, \citep{Schutzenberger65} shows that the star-free languages are precisely those recognised by (finite) aperiodic monoids. To prove this, Sch{\"u}tzenberger identified a semidirect product construction which captures dually the application of concatenation product on languages. The fact that star-free languages are precisely those given by first-order sentences of B{\"u}chi's logic was subsequently shown in \citep{MP1971}, though some passages in the introduction of \citep{Schutzenberger65} suggest that Sch{\"u}tzenberger was aware of this connection when he proved his result.

\paragraph{Domain Theory in Logical Form.}

In denotational semantics one seeks mathematical models of programs, which should be assigned in a compositional way. The compositionality means that program constructors should correspond to type constructors, and solutions to domain equations should correspond to program specifications. Scott's original solution to the domain equation 
\[
X\cong [X,X],
\]
seeking a domain $X$ which is isomorphic to the domain of its endomorphisms, was obtained by constructing a profinite poset, that is, a spectral space. Much further work confirmed that categorical methods, topology and in particular duality are central to the theory, cf.\ \citep{ScSt71,Plo76,SP82,Smyth83,LW91}. Rather than seeing Stone duality and its variants as useful technical tools for denotational semantics, Abramsky put Stone duality front and center stage: A \emph{program logic} is given in which denotational types correspond to theories and the ensuing Lindenbaum-Tarski algebras of the theories are bounded distributive lattices, whose dual spaces yield the domains as types. The constructors involved in the domain equations thus have duals under Stone duality, and solutions are obtained as duals of the solutions of the corresponding equation on the lattice side. In \citep{Abramsky87} Stone duality is restricted to the so-called Scott domains. That is, algebraic domains that are consistently complete. These are fairly simple and are closed under many constructors, including function space. In \citep{Abramsky91} the larger category of bifinite domains, which, in addition, is closed under powerdomain constructions, is used. We will say a bit more about bifinite domains later, but for now, we illustrate with a simple example at the level of spectral spaces.

\vspace{1em}
The Smyth powerdomain, $\mathbb S(X)$, is the space whose points are the compact and saturated\footnote{A subset $K\subseteq X$ is saturated provided it is an intersection of opens, or equivalently, it is an up-set in the specialisation order of the space $X$.} subsets of $X$ equipped with the upper Vietoris topology \citep{Smyth83}.  That is, the topology is generated by the subbasis given by the sets
\[
\Box U = \{K\in \mathbb S(X)\mid K\subseteq U\}, \ \ \text{for $U\subseteq X$ open}.
\]
At first sight, this may seem like quite an exotic object to pull out of a hat to study non-determinism. However, in Abramsky's duality with program logic, this construct is the Stone dual of adding a layer of (demonic) non-determinism. Indeed, if $X$ is a spectral space, then so is $\mathbb S(X)$, and if $L$ is the dual of $X$, then $\mathbb S(X)$ is the dual of
\[
F_\Box(L)=\mathbb F_{DL}(\Box L)/_{\approx}.
\]
Here, $\mathbb F_{DL}(\Box L)$ denotes the free distributive lattice\footnote{All distributive lattices are assumed to be bounded, and lattice homomorphisms preserve these bounds.} on the set of formal generators $\Box L=\{\Box a\mid a\in L\}$, and $\approx$ is the congruence given by the following scheme of relations between the generators:
\[
\Box(\bigwedge G)\ \approx \ \bigwedge\Box G \ \ \text{for $G\subseteq L$ finite}.
\]
Note that the Smyth powerdomain generalises the Vietoris hyperspace construction for Boolean spaces and, indeed, when $L=B$ is a Boolean algebra, the Booleanization of the lattice $F_\Box(B)$ coincides with the Boolean algebra $\MA(B)$ from Section~\ref{s:modal-Vietoris}.

Now the domain equation $X=\mathbb S(X)$ is solved by the final coalgebra for $\mathbb S$. However, a priori, there is no guarantee that it exists. On the other hand, the dual equation $L=F_\Box(L)$ is solved by the initial algebra, i.e.\ the free $\Box$-algebra over the empty set. As explained in Section~\ref{s:modal-Vietoris}, the latter algebra is guaranteed to exist since algebraic varieties are closed under filtered colimits.

Even though the duality theoretic paradigm supplied by the program logic makes it clearer why $\mathbb S(X)$ is the right object, one may still wonder how difficult it is to discover that $F_\Box(L)$ and $\mathbb S(X)$ are dual to each other. But this also is made quite algorithmic by duality: The dual of a free distributive lattice, such as $\mathbb F_{DL}(\Box L)$, is simply the Sierpinski cube $\two^{\Box L}$.\footnote{In this section, the dualizing object $\two$ is regarded as either a distributive lattice, or a spectral space by equipping the two-element set with the Sierpinski topology.} Indeed, a subset $S\subseteq \Box L$ corresponds to the unique homomorphism $h_S\colon \mathbb F_{DL}(\Box L)\to \two$ extending the characteristic map $\chi_S\colon \Box L\to \two$. Viewed as a theory (or prime filter) it is $F_S=\{\phi\mid \exists S'\subseteq S \text{ finite with }\bigwedge S'\leq\phi\}$. Also, a quotient of $\mathbb F_{DL}(\Box L)$ such as $F_\Box(L)$ is dual to a subspace of $\two^{\Box L}$, namely the one consisting of all those $S\subseteq \Box L$ such that 
\[
\Box(\bigwedge G)\in F_S\ \ \iff \ \  \bigwedge\Box G\in F_S, \ \ \text{ for $G\subseteq L$ finite}.
\]
By the definition of $F_S$, this is equivalent to 
\[
\Box(\bigwedge G)\in S \ \ \iff \ \ \Box G\subseteq S, \ \ \text{ for $G\subseteq L$ finite}.
\]
Note that $\two^{\Box L}$ is homeomorphic to $\P(L)$ with the topology generated by the sets $\widetilde{a}=\{S\in \P(L)\mid a\in S\}$ for $a\in L$. Viewed as subsets of $L$, the elements that belong to the dual of $F_\Box(L)$ are precisely the filters of $L$. 
That is, $\mathbb S(X)$ is homeomorphic to the space ${\rm Filt}(L)$ equipped with the topology generated by the sets $\widetilde{a}$ for $a\in L$. 
This algorithmic method, using duality for quotients of free algebras and then inductively adding layers of a connective, has been applied widely in the setting of propositional logics, see e.g.\ \citep{Ghilardi1992,BeGe10, Ghilardi10,CovG12}.\\

In \citep{Abramsky91} a large number of constructors such as $\mathbb S$ are treated, including the function space which, given two spaces $X$ and $Y$, yields the space $[X,Y]$ of all continuous functions $X\to Y$ in the compact-open topology. This case is more subtle, but it is closely related to the one above, and to the duality between lattices with residuation and Stone topological algebras, which is at the heart of the duality theory of profinite methods in automata theory. For these reasons, we go in a bit more detail. The following are extracts of a book in preparation \citep{GvGprepub}.

Consider the duality as above but for the operator type of implication. That is, given distributive lattices (DLs) $L$ and $M$, define
\[
F_\to(L\times M)=\mathbb F_{DL}(\to(L\times M))/_{\approx},
\]
where $\to(L\times M)=\{a\to b\mid a\in L, \, b\in M\}$ are the formal generators and $\approx$ is the congruence given by the following two schemes of relations between the generators:
\begin{enumerate}
\item[(i)] $a\to\bigwedge G=\bigwedge\{a\to b\mid b\in G\}$ for $a\in L$ and $G\subseteq M$ finite;
\item[(ii)] $\bigvee F\to b=\bigwedge\{a\to b\mid a\in F\}$ for $F\subseteq L$ finite and $b\in M$.
\end{enumerate}
Going through the same exercise as outlined above to identify the elements of $\two^{L\times M}$ which are compatible with the schemes (i) and (ii), one obtains the following result.

\begin{theorem}
Let $L$ and $M$ be DLs, and let $X$ and $Y$ be their respective dual spaces. The dual of $F_\to(L\times M)$ is the space $[X,\mathbb S(Y)]$ of continuous functions from $X$ to the Smyth powerspace of $Y$, in the compact-open topology.
\end{theorem}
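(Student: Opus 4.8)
The plan is to follow exactly the template used above for the Smyth powerdomain $F_\Box$, only now with two interacting schemes of relations. First I would record that the dual space of the free distributive lattice $\mathbb F_{DL}(\to(L\times M))$ is the Sierpinski cube $\two^{L\times M}$, which I identify with $\P(L\times M)$ topologised by the sets $\widetilde{(a,b)}=\{S\mid (a,b)\in S\}$; a subset $S$ corresponds to the prime filter $F_S$ it generates. Running the same computation as for $F_\Box$ --- replacing each generator by the condition that it lie in $S$, and each finite meet of generators by the condition that all of them lie in $S$ --- the quotient $F_\to(L\times M)$ becomes dual to the subspace of those $S$ compatible with (i) and (ii), namely the $S$ satisfying $(a,\bigwedge G)\in S \iff \{a\}\times G\subseteq S$ and $(\bigvee F,b)\in S\iff F\times\{b\}\subseteq S$. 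Reading these fibrewise, this says precisely that for every $a\in L$ the set $S_a=\{b\mid (a,b)\in S\}$ is a filter of $M$, and for every $b\in M$ the set $S^b=\{a\mid (a,b)\in S\}$ is an ideal of $L$ (the empty-$G$ and empty-$F$ instances giving $(a,\top)\in S$ and $(\bot,b)\in S$).

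Next I would set up the bijection between such subsets $S$ and continuous maps $X\to\mathbb S(Y)$, using the identification $\mathbb S(Y)\cong\mathrm{Filt}(M)$ from the Smyth computation and the fact that $X$ and $Y$ are spectral with compact opens $\widehat a$ and $\widehat b$ corresponding to lattice elements. To a point $x$ of $X$ (a prime filter of $L$) I assign the filter $f_S(x)=\bigcup_{a\in x}S_a$; this is a genuine filter of $M$ because the ideal condition on the $S^b$ makes $a\mapsto S_a$ antitone, so the family $\{S_a\mid a\in x\}$ is up-directed along the meet-closed filter $x$, and a directed union of filters is a filter. In the other direction I send a continuous $f\colon X\to\mathbb S(Y)$ to $S_f=\{(a,b)\mid f(\widehat a)\subseteq\widetilde b\}$, and the identities $\widetilde{\bigwedge G}=\bigcap_{b\in G}\widetilde b$ and $\widehat{\bigvee F}=\bigcup_{a\in F}\widehat a$ show directly that $(S_f)_a$ is a filter and $(S_f)^b$ an ideal. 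That these assignments are mutually inverse reduces to the single equivalence $f_S(\widehat a)\subseteq\widetilde b \iff (a,b)\in S$.

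The crux --- and the step I expect to be the main obstacle --- is the nontrivial implication of that equivalence together with the reconciliation of topologies, both of which turn on compactness. For the equivalence, assuming $f_S(\widehat a)\subseteq\widetilde b$ means every prime filter containing $a$ meets the ideal $S^b$, i.e.\ $\widehat a\subseteq\bigcup_{a'\in S^b}\widehat{a'}$; compactness of $\widehat a$ extracts a finite subcover, closure of the ideal $S^b$ under the resulting finite join gives some $a^\ast\in S^b$ with $a\le a^\ast$, and downward closure then yields $(a,b)\in S$. For the topology, I would show that the spectral topology on the dual of $F_\to(L\times M)$, generated by the images $\widehat{a\to b}=\{S\mid(a,b)\in S\}$ of the generators, transports under the bijection to the subbasis $\langle\widehat a,\widetilde b\rangle=\{f\mid f(\widehat a)\subseteq\widetilde b\}$, and that this subbasis generates the compact-open topology on $[X,\mathbb S(Y)]$. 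The latter is the delicate point: given a compact-open subbasic set $\langle K,U\rangle$ with $K$ compact in $X$ and $U$ a union of sets $\widetilde b$, for each $f$ in it I cover $K$ by compact opens $\widehat{a_x}$ with $f(\widehat{a_x})$ inside a single $\widetilde b$, take a finite subcover by compactness of $K$, and thereby trap $f$ inside a finite intersection of sets $\langle\widehat a,\widetilde b\rangle$ contained in $\langle K,U\rangle$; conversely each $\langle\widehat a,\widetilde b\rangle$ is manifestly compact-open. Continuity of $f_S$ is then immediate, since $f_S^{-1}(\widetilde b)=\bigcup_{a\in S^b}\widehat a$ is open, and the proof concludes.
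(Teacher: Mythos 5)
Your proposal is correct and follows essentially the paper's own (merely sketched) route: dualize the free DL to the Sierpinski cube, carve out the subspace of $S$ compatible with schemes (i) and (ii) --- your fibrewise reading that each $S_a$ is a filter of $M$ and each $S^b$ an ideal of $L$ --- and identify it with $[X,\mathbb S(Y)]$ via the description $\mathbb S(Y)\cong\mathrm{Filt}(M)$, with compactness of the $\widehat a$'s and of $K$ carrying the two delicate steps exactly as you indicate. The only point stated too quickly is that mutual inverseness does not quite reduce to the single equivalence $f_S(\widehat a)\subseteq\widetilde b\iff(a,b)\in S$ (which gives $S_{f_S}=S$): the other composite $f_{S_f}=f$ additionally needs that a continuous $f$ is determined by the data $\{(a,b)\mid f(\widehat a)\subseteq\widetilde b\}$, a routine consequence of continuity and the basis of compact opens --- the same ingredients you already use in the topology comparison.
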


This provides a dual description of $[X,\mathbb S(Y)]$, but we are interested in $[X,Y]$ which is a subspace of $[X,\mathbb S(Y)]$. However, it is not in general a closed subspace in the patch topology, reflecting the fact that $[X,Y]$ is not in general a spectral space. One would need to move to frames, sober spaces and geometric theories to describe $[X,Y]$ as the dual of a quotient. However, we have the following approximation.

\begin{proposition}\label{prop:preserves-joins-at-primes}
Let $L$ and $M$ be DLs, and $X,Y$ their respective dual spaces.  The dual of the quotient of $F_\to(L\times M)$ by a congruence $\theta$ is a subspace of $[X,Y]$ if and only if for all $x\in X$, $a\in F_x$, and finite subset $G\subseteq M$, there is $a'\in F_x$  such that
\[
[a\to(\bigvee G)]_\theta\ \leq\ [\bigvee\{a'\to b\mid b\in G\}]_\theta.
\]
Here, $F_x$ denotes the prime filter of $L$ corresponding to the point $x\in X$.
\end{proposition}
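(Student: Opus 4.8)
The plan is to translate the statement through the duality of the preceding Theorem and reduce it to a concrete condition on points of the dual space. First, recall that under the identification of $\mathbb{S}(Y)$ with the space $\mathrm{Filt}(M)$ of filters of $M$ (as in the Smyth powerspace computation above), the subspace $Y\hookrightarrow\mathbb{S}(Y)$, $y\mapsto\,\upset y$, is precisely the set of \emph{prime} filters: a point $y\in Y$ corresponds to the prime filter $F_y$. Hence $[X,Y]\subseteq[X,\mathbb{S}(Y)]$ consists of those continuous maps taking only prime-filter values, and the assertion that the dual of the quotient of $F_\to(L\times M)$ by $\theta$ is a subspace of $[X,Y]$ means exactly that every point $g$ of this dual satisfies: $g(x)$ is a (proper) prime filter of $M$ for all $x\in X$.

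Next I would make $g$ explicit. A point of the dual of the quotient is a prime filter $S$ of $F_\to(L\times M)/\theta$, and I write $g\models(a\to b)$ for $[a\to b]_\theta\in S$. Using that the sets $\widehat a$ ($a\in L$) form a basis of compact opens of $X$, together with continuity of $g$, one checks the explicit formula
\[
g(x)=\{\,b\in M\mid \exists a\in F_x,\ g\models a\to b\,\},
\]
which is always a filter by the defining relations (i)--(ii). Thus $g(x)$ is prime and proper if and only if, for every finite $G\subseteq M$, whenever $\bigvee G\in g(x)$ there is $b\in G$ with $b\in g(x)$. Spelling this out with the displayed formula, primeness of $g(x)$ for all $x$ becomes the semantic statement: for all $x\in X$ and finite $G\subseteq M$, if some $a\in F_x$ has $g\models a\to\bigvee G$, then some $a'\in F_x$ and $b\in G$ have $g\models a'\to b$.

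For $(\Leftarrow)$, assume the displayed inequality and let $g$ (a prime filter $S$) and $x$ be given with $\bigvee G\in g(x)$, witnessed by $a\in F_x$. The hypothesis yields $a'\in F_x$ with $[a\to\bigvee G]_\theta\le[\bigvee\{a'\to b\mid b\in G\}]_\theta$; since $S$ is up-closed this forces $\bigvee\{a'\to b\mid b\in G\}\in S$, and primeness of $S$ produces $b\in G$ with $g\models a'\to b$, i.e.\ $b\in g(x)$ (the case $G=\emptyset$ gives properness). For $(\Rightarrow)$ I argue contrapositively: if the condition fails at some $x_0$, $a_0\in F_{x_0}$ and finite $G_0$, I build $g$ in the dual with $g(x_0)$ non-prime by separating $[a_0\to\bigvee G_0]_\theta$ from the ideal $I$ of the quotient generated by $\{[a'\to b]_\theta\mid a'\in F_{x_0},\,b\in G_0\}$, invoking the prime filter theorem for distributive lattices.

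The crux---and the only genuinely non-formal step---is showing $[a_0\to\bigvee G_0]_\theta\notin I$. A finite join in $I$ has the form $\bigvee_i[a_i'\to b_i]_\theta$ with $a_i'\in F_{x_0}$ and $b_i\in G_0$; setting $a'=\bigwedge_i a_i'\in F_{x_0}$ and using antitonicity of $\to$ in its first argument (a consequence of (ii)), each $[a_i'\to b_i]_\theta\le[a'\to b_i]_\theta$, so the join is dominated by $[\bigvee\{a'\to b\mid b\in G_0\}]_\theta$. Were $[a_0\to\bigvee G_0]_\theta$ below it, the failed condition for this particular $a'$ would be contradicted. Hence a separating prime filter $S$ exists; the corresponding $g$ satisfies $g\models a_0\to\bigvee G_0$ (so $\bigvee G_0\in g(x_0)$) while $g\not\models a'\to b$ for all $a'\in F_{x_0}$ and $b\in G_0$ (so no $b\in G_0$ lies in $g(x_0)$), witnessing non-primeness. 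I expect the bookkeeping in this reduction from arbitrary finite joins to a single $a'$, together with pinning down the exact dictionary entry $g\models(a\to b)\iff\forall x\in\widehat a,\ b\in g(x)$, to be the delicate parts.
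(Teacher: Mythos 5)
Your proof is correct. Note that the paper itself supplies no proof of this proposition --- it is quoted as an extract from the book in preparation \citep{GvGprepub} --- so there is no in-paper argument to compare against; but your route is the natural completion of the paper's setup, and every step checks out. Specifically: (1) the reduction of ``the dual of the quotient is a subspace of $[X,Y]$'' to ``every point $g$ compatible with $\theta$ takes only proper prime-filter values'' is exactly right, since $Y$ embeds in $\mathbb{S}(Y)\cong \mathrm{Filt}(M)$ as the proper prime filters and the compact-open topology on $[X,Y]$ is the induced subspace topology; (2) the dictionary $g\models(a\to b)\iff \forall x\in\widehat{a},\ b\in g(x)$, and the recovery formula $g(x)=\{b\in M\mid \exists a\in F_x,\ g\models a\to b\}$ via continuity and the compact-open basis, are precisely what the proof of the preceding theorem provides, so invoking them is legitimate; (3) in the $(\Leftarrow)$ direction, the use of up-closure and primeness of $S$, including the $G=\emptyset$ case delivering properness, is sound; and (4) in the $(\Rightarrow)$ direction, the only nontrivial point --- that $[a_0\to\bigvee G_0]_\theta$ does not lie in the ideal generated by $\{[a'\to b]_\theta\mid a'\in F_{x_0},\,b\in G_0\}$ --- is correctly settled by collapsing an arbitrary finite join $\bigvee_i[a_i'\to b_i]_\theta$ to a single witness $a'=\bigwedge_i a_i'\in F_{x_0}$ using antitonicity of $\to$ in its first argument (a consequence of scheme (ii)), after which the Stone prime filter separation theorem produces the required point $g$ with $g(x_0)$ failing to be a proper prime filter. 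The verifications you flag as routine (that $g(x)$ is always a filter, by schemes (i) and (ii)) are indeed routine.
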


The above property may be thought of as saying that the operations $x\to(\ARG)$, for $x\in X$, preserve finite joins. For this reason, it has been called `preserving joins at primes'. Cf.\ Section~3.2 of \citep{Geh16}, where it is used to characterise the lattices with residuation that are dual to topological algebras based on Boolean spaces.

\vspace{1em}
There is a special case in which we can get our hands on the property of preserving joins at primes with a finitary scheme of relations between generators. This is the case where the lattice $L$ has enough join prime elements, i.e.\ every $a\in L$ is a finite join of join prime elements of $L$. This is for example true in free distributive lattices (where the meets of finite sets of generators are join prime), and it is intimately related to the interaction of domain theory and Stone duality as we have the following theorem.

\begin{theorem}\cite[Theorem 2.4.5]{Abramsky91}
A lattice has enough join primes if, and only if, its dual space endowed with the Scott topology is a domain. 
\end{theorem}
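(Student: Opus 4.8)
The plan is to realise the dual space $X$ concretely as the set of prime filters of $L$ ordered by inclusion, and to read off both the domain structure and the join-primes from this single description (I read ``domain'' as an algebraic dcpo, as in the bifinite setting of the cited work). First I would record three facts about $(X,\subseteq)$. A directed union of prime filters is again a prime filter, so $(X,\subseteq)$ is a dcpo whose directed suprema are unions, and this inclusion order is exactly the specialisation order of the spectral topology. Each basic compact-open $\w a=\{F\mid a\in F\}$ is an up-set, and if $a\in\bigcup_i F_i$ for a directed family then $a\in F_i$ for some $i$, so $\w a$ is Scott-open; hence the Scott topology refines the spectral one. Finally, the principal prime filters are precisely the sets $\upset p$ with $p$ join-prime (primality of $\upset p$ is literally join-primeness of $p$), and for such $p$ one has $\w p=\{F\mid \upset p\subseteq F\}$, so $\w p$ is at once a spectral compact-open and the principal Scott-open generated by the point $\upset p$. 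This dictionary between join-primes and principal prime filters is the hinge of the argument.

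For the forward implication, assume $L$ has enough join-primes and fix a prime filter $F$. I would show that $F$ is the directed union of the principal prime filters $\upset p$ with $p$ join-prime and $p\in F$. The union exhausts $F$: given $a\in F$, write $a=p_1\vee\dots\vee p_k$ with each $p_i$ join-prime; since $F$ is prime some $p_i$ lies in $F$, whence $a\in\upset p_i$. Directedness is similar: for join-primes $p,q\in F$ we have $p\wedge q\in F$, and expressing $p\wedge q$ as a finite join of join-primes and using primality yields a join-prime $r\in F$ below both $p$ and $q$. The same union argument shows every compact element of the dcpo is principal, so the compact elements are exactly the $\upset p$; they therefore form a basis and $X$ is an algebraic domain.

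For the converse, assume $X$ is a domain and fix $a\in L$. The set $\w a$ is Scott-open, so by algebraicity every point of $\w a$ lies above some compact element that itself lies in $\w a$. By the dictionary such a compact element is a principal prime filter $\upset p$ with $p$ join-prime, the containment $\upset p\in\w a$ forces $p\le a$, and its principal Scott-open is the spectral compact-open $\w p\subseteq\w a$. Thus $\{\w p\mid p\text{ join-prime},\ p\le a\}$ is a cover of $\w a$ by spectral-opens. As $\w a$ is compact in the spectral topology, finitely many suffice, so $\w a=\w{p_1}\cup\dots\cup\w{p_n}=\w{p_1\vee\dots\vee p_n}$; since $\w{(\ARG)}$ is injective and each $p_i\le a$, this gives $a=p_1\vee\dots\vee p_n$, a finite join of join-primes.

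The step I expect to be the main obstacle is the dictionary in the converse direction, namely that every compact element of the domain is a principal prime filter (equivalently, arises from a join-prime). In the forward direction this is immediate from the hypothesis, but in the converse one cannot assume that any join-primes exist and must instead extract principality purely from algebraicity --- showing that a non-principal prime filter can always be exhibited as a directed union of strictly smaller prime filters, and so fails to be compact. Once this correspondence between finite elements and join-primes is secured, the only remaining ingredient is the spectral compactness of $\w a$, which supplies the finiteness of the join and closes the argument.
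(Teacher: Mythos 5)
Your proof breaks at exactly the step you flag as the main obstacle, and the underlying problem is your reading of the statement. You interpret the theorem purely order-theoretically (``domain'' $=$ algebraic dcpo), treating ``endowed with the Scott topology'' as decoration. But the topological clause is substantive: the theorem (Abramsky's Theorem~2.4.5, which the paper cites without giving a proof) asserts that the dual space \emph{with its spectral topology} is an algebraic domain carrying the Scott topology of its specialisation order, so in the converse direction the coincidence of the two topologies is part of the hypothesis. Under your order-only reading the converse is simply false. Let $L$ be the finite--cofinite algebra of subsets of $\N$ (add a new top element to $L$ if you want the dual space to have a least point). Its prime filters are the principal ultrafilters $\upset\{n\}$ together with the filter $F_\infty$ of cofinite sets, and these form an antichain under inclusion; hence the specialisation poset is an algebraic dcpo (a flat domain, in the pointed variant) in which \emph{every} element is compact, including the non-principal $F_\infty$. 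Yet $L$ does not have enough join primes: its join-primes are the atoms $\{n\}$, and a cofinite set is not a finite join of them. In particular the repair you propose for the dictionary --- that a non-principal prime filter can always be written as a directed union of strictly smaller prime filters and so fails to be compact --- is not a missing lemma but a false statement: $F_\infty$ has no prime filters strictly below it at all.

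The converse can be completed, but only by invoking the part of the hypothesis your argument never touches, namely that the spectral topology equals the Scott topology. (Indeed, your converse uses only Scott-openness and spectral compactness of $\w{a}$, which hold for every distributive lattice; that alone should raise suspicion.) Given a compact element $K$ of the prime filter dcpo, the up-set $\{G\mid K\subseteq G\}$ is Scott-open, hence by hypothesis spectral-open, so there is $a\in K$ with $\w{a}\subseteq\{G\mid K\subseteq G\}$. Since $K$ is a filter, $\upset a\subseteq K$; conversely, for every $b\in K$, each prime filter containing $a$ contains $K$ and hence $b$, so $a\le b$ by prime filter separation, giving $K\subseteq\upset a$. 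Thus $K=\upset a$ with $a$ join-prime, and the rest of your covering-and-compactness argument goes through verbatim. Symmetrically, your forward direction is incomplete under the correct reading: besides algebraicity of the order you must show that the spectral topology is the Scott topology, which your decomposition $F=\bigcup\{\upset p\mid p\in F,\ p\ \text{join-prime}\}$ yields immediately --- any Scott-open up-set $U$ containing $F$ must contain some $\upset p$ with $p\in F$, hence contains $\w{p}\ni F$, so $U$ is a union of basic spectral opens.
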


Let $L$ be a lattice with enough join primes, and $X$ its dual space. If $P=J(L)$ is the subposet of join prime elements of $L$, the free distributive lattice on the \emph{poset} $P$ is isomorphic to $L$. Further, $X\cong{\rm Idl}(P^{\op})$, the free directed join completion of $P^{\op}$ in the Scott topology, while $P^{\op}\cong{\rm Comp}(X)$, the set of compact elements of $X$. In particular, $X$ is an algebraic domain. Accordingly, we see that everything, i.e.\ $L$, $X$, and the compact elements of $X$, is determined by $P$. The posets $P$ that occur in this way were described already in \citep{Plo76}, where the profinite domains were characterised as those algebraic domains for which the set of compact elements form a `MUB-complete poset' in the nomenclature of \citep{AbrJung}. 
We now have a corollary of Proposition~\ref{prop:preserves-joins-at-primes}.

\begin{corollary}
Let $L$ and $M$ be DLs with dual spaces $X$ and $Y$, respectively. Suppose $L$ has enough join primes and let $P=J(L)$. Then the quotient of $F_\to(L\times M)$ by the congruence $\theta$ given by the following scheme is dual to the function space $[X,Y]$:
\[
p\to\bigvee G\approx\bigvee\{p\to b\mid b\in G\} \ \ \text{for $p\in P$ and $G\subseteq M$ finite}.
\]
\end{corollary}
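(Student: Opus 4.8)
The plan is to derive the corollary from Proposition~\ref{prop:preserves-joins-at-primes}. That proposition tells us exactly when the dual of a quotient of $F_\to(L\times M)$ lands inside $[X,Y]$, so the argument splits into two halves. First I would verify that the finitary scheme $\theta$ satisfies the ``preserving joins at primes'' hypothesis, which yields that the dual of the $\theta$-quotient is a subspace of $[X,Y]$. Then I would show that this subspace is in fact all of $[X,Y]$, by checking that every continuous $f\colon X\to Y$ respects the defining relations of $\theta$. Throughout I identify $[X,Y]$ with a subspace of $[X,\mathbb{S}(Y)]$ via the embedding $y\mapsto\upset y$, and read off the interpretation of the generators $a\to b$ from the theorem describing $[X,\mathbb{S}(Y)]$: the point corresponding to $f$ contains $a\to b$ precisely when $f(\widehat a)\subseteq\widehat b$.

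For the first half, fix $x\in X$, an element $a\in F_x$, and a finite $G\subseteq M$. Since $L$ has enough join primes, write $a=p_1\vee\dots\vee p_k$ with each $p_i\in P$. As $F_x$ is a prime filter and $a\in F_x$, some $p_j$ lies in $F_x$, and I would take $a'=p_j$. Scheme (ii) makes $\to$ antitone in its first argument, since $p_j\le a$ gives $a\to b=(a\vee p_j)\to b=(a\to b)\wedge(p_j\to b)\le p_j\to b$; with $b=\bigvee G$ this yields $a\to\bigvee G\le p_j\to\bigvee G$ already in $F_\to(L\times M)$. The defining relation of $\theta$ then gives $[p_j\to\bigvee G]_\theta=[\bigvee\{p_j\to b\mid b\in G\}]_\theta$, so chaining these produces $[a\to\bigvee G]_\theta\le[\bigvee\{a'\to b\mid b\in G\}]_\theta$, which is exactly the inequality demanded by Proposition~\ref{prop:preserves-joins-at-primes}. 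Hence the dual of the $\theta$-quotient is a subspace of $[X,Y]$.

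For the second half, I would use the domain structure supplied by the hypothesis that $L$ has enough join primes, i.e.\ the theorem of \citep{Abramsky91} recalled above: a join prime $p\in P$ corresponds to a compact point $c_p$ of the domain $X$, with $\widehat p=\upset c_p$. Under the interpretation above, a function $f\in[X,Y]$ respects the $\theta$-relation at $(p,G)$ precisely when $f(\widehat p)\subseteq\bigcup_{b\in G}\widehat b$ forces $f(\widehat p)\subseteq\widehat{b_0}$ for a single $b_0\in G$ (one implication being automatic, as $\bigvee\{p\to b\}\le p\to\bigvee G$). Given $f\in[X,Y]$, evaluate the premise at the least point $c_p$ of $\widehat p$: then $f(c_p)\in\widehat{b_0}$ for some $b_0\in G$, and since $f$ is continuous hence monotone for the specialisation order and $\widehat{b_0}$ is open and therefore an up-set, $c_p\sqsubseteq x$ gives $f(x)\in\widehat{b_0}$ for every $x\in\widehat p=\upset c_p$. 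Thus $f(\widehat p)\subseteq\widehat{b_0}$, so $f$ respects $\theta$. Combining the two inclusions shows the dual of the $\theta$-quotient is exactly $[X,Y]$.

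The routine work is the compatibility bookkeeping with schemes (i) and (ii) and the directions of the specialisation order; the crux is the second half, where join primality is used essentially. The key point is that $\widehat p=\upset c_p$ has a least element, so a covering of $f(\widehat p)$ by the opens $\widehat b$ can be refined to a single $\widehat b$ by testing at $c_p$ and propagating upward by monotonicity --- precisely the step that fails for a general $a\in L$, whose $\widehat a$ need not be a principal up-set, and that motivates indexing the scheme $\theta$ by the join primes.
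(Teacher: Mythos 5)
Your proof is correct and takes essentially the same route the paper intends: the paper presents this statement as a direct consequence of Proposition~\ref{prop:preserves-joins-at-primes} without spelling out the details, and your two halves --- verifying the preserving-joins-at-primes condition (using primality of $F_x$ to pick $a'=p_j$ and antitonicity in the first argument from scheme (ii)), and then checking that every $f\in[X,Y]$ is compatible with $\theta$ by evaluating at the least point $c_p$ of $\widehat{p}=\upset c_p$ and propagating upward by monotonicity --- are exactly the intended filling-in of that derivation, including the key use of join primality that makes $\widehat{p}$ a principal up-set.
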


In the above, we have just talked about spectral spaces and domains, but in order to have a class of spectral domains not only closed under function spaces and products, but also under the various versions of powerdomain, one must restrict oneself to the so-called bifinite domains. These were introduced (in the setting of domains with a least element) in \citep{Plo76} as generated by special MUB-complete posets $P$ now known as Plotkin orders \cite[Definition~4.2.1]{AbrJung}. These also have a beautiful very self-dual description relative to Stone duality.

The following definition applies to categories concrete over the category $\textbf{Pos}$ of posets and monotone maps, such as the category of DLs or that of spectral spaces and spectral maps (w.r.t.\ the specialization order) with the obvious forgetful functors.
\begin{definition}\label{def:embedding-retraction-pair}
Let $\mathcal C$ be a category equipped with a faithful functor $U\colon \mathcal{C}\to\mathbf{Pos}$. A pair of morphisms $C\xrightarrow{f}D\xrightarrow{g}C$ in $\mathcal C$ is an \emph{embedding-retraction-pair (e-r-p)} provided $(U(f),U(g))$ is an adjoint pair, and $U(f)$ is injective.\footnote{It follows from these two conditions that $U(f)$ is an embedding with left inverse $U(g)$.}
Further, such an e-r-p is said to be \emph{finite} if $U(C)$ is finite.\end{definition}

We have the following easy duality result.

\begin{proposition}\label{cor:e-r-p-duality}
In Stone duality, the dual of a (finite) embedding-retraction-pair on either side of the duality is a (finite) embedding-retraction-pair on the other side.
\end{proposition}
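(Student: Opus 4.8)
The plan is to run the duality concretely at the level of underlying posets, exploiting the standard identification of the specialization order of a dual space with a pointwise order on homomorphisms into the dualizing object $\two$. I will argue for an e-r-p of distributive lattices; the space side is symmetric, as explained at the end. Fix an e-r-p $C\xrightarrow{f}D\xrightarrow{g}C$ with forgetful functor $U$ to $\mathbf{Pos}$. By definition $U(f)\dashv U(g)$ and $U(f)$ is injective, whence, by the footnote to Definition~\ref{def:embedding-retraction-pair}, $U(g)\circ U(f)=\id$, and so $g\circ f=\id_C$ by faithfulness of $U$. Writing $X_L=\hom_{\mathbf{DL}}(L,\two)$ for the dual spectral space (points = prime filters = homomorphisms into $\two=\{0<1\}$), I would record two standard facts: (i) the specialization order that the forgetful functor $U'\colon \text{spectral spaces}\to\mathbf{Pos}$ reads off $X_L$ is the pointwise order on $\hom_{\mathbf{DL}}(L,\two)$; and (ii) the dual of a homomorphism $h$ is the precomposition map $h^*=(\ARG)\circ h$, which is automatically a spectral map (so only the $\mathbf{Pos}$-level conditions remain to be verified) and monotone for that order. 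Applying the duality yields $C^*\xrightarrow{g^*}D^*\xrightarrow{f^*}C^*$, and contravariance turns $g\circ f=\id_C$ into $f^*\circ g^*=\id_{C^*}$; hence $g^*$ is the section, $f^*$ the retraction, and $g^*$ the candidate embedding.

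The core step is to verify $g^*\dashv f^*$ for the specialization orders, i.e.\ $U'(g^*)\dashv U'(f^*)$. By (i)--(ii) this unwinds to the equivalence
\[
\psi\circ g\leq\chi \quad\Longleftrightarrow\quad \psi\leq\chi\circ f
\qquad(\psi\in X_C,\ \chi\in X_D),
\]
of pointwise inequalities of homomorphisms into $\two$. I would prove both implications straight from the unit and counit witnessing $U(f)\dashv U(g)$: forwards, apply the hypothesis at $f(c)$ and combine the unit $c\leq g(f(c))$ with monotonicity of $\psi$; backwards, apply the hypothesis at $g(d)$ and combine the counit $f(g(d))\leq d$ with monotonicity of $\chi$. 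That $U'(g^*)$ is injective is immediate, since $f^*\circ g^*=\id$ makes it a split monomorphism of posets. This exhibits $C^*\xrightarrow{g^*}D^*\xrightarrow{f^*}C^*$ as an e-r-p. Finiteness transfers because a finite distributive lattice has only finitely many prime filters and a finite spectral space only finitely many compact opens, so $U(C)$ is finite if and only if $U'(C^*)$ is.

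I expect the only real subtlety, and hence the thing to get right rather than a genuine obstacle, to be variance book-keeping: that under the contravariant duality the section and retraction swap, so the dual embedding is $g^*$ (the dual of the \emph{retraction}, not of the embedding), and that the specialization convention in fact (i) is precisely the one making the induced adjunction have $g^*$ as lower adjoint. A clean packaging that also makes the symmetry manifest is to observe that an e-r-p is exactly an internal adjunction $f\dashv g$ with invertible unit in the relevant $\mathbf{Pos}$-enriched category, and that fact (i) says Stone duality is a $\mathbf{Pos}$-enriched contravariant equivalence (order-preserving on hom-posets); such an equivalence sends $f\dashv g$ with invertible unit to $g^*\dashv f^*$ with invertible unit, which is again an e-r-p. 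Reading this in the opposite direction --- with $U$ and $U'$ interchanged, the underlying order on the lattice side being inclusion of compact opens and the dual maps being preimages --- gives the statement for e-r-p's presented on the space side, the unit/counit computation being verbatim the same and the finiteness clause symmetric.
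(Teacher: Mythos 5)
Your proof is correct. The paper actually states this proposition without proof (introducing it only as an ``easy duality result''), and your argument --- transporting the adjunction through the order-preserving (Pos-enriched) contravariant equivalence, with the unit/counit computation on homomorphisms into $\two$, the section and retraction swapping under contravariance, and the specialization-order convention correctly flagged as the one point requiring care --- is exactly the verification the paper leaves implicit.
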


We may then define bifiniteness in the setting of spectral spaces, rather than in the setting of domains as it is customarily done.

\begin{definition}\label{def:spectral-bif}
Let $X$ be a spectral space, and $L$ its dual lattice. We say that $X$ and $L$ are \emph{bifinite} provided the following two equivalent conditions are satisfied:
\begin{enumerate}
\item $X$ is the cofiltered limit of the retractions of its finite e-r-p's;
\item $L$ is the filtered colimit of the embeddings of its finite e-r-p's.
\end{enumerate}
\end{definition}

The following proposition, which clearly implies that a bifinite lattice must have enough join primes, allows us to conclude that bifinite spectral spaces are bifinite domains. Thus, the above definition is no more general than the standard one.

\begin{proposition}\label{prop:fin-lat-e-r-p}
Let $L$ be a distributive lattice and $K\subseteq L$ a finite sublattice. Then the following conditions are equivalent:
\begin{enumerate}
\item There is a lattice homomorphism $h\colon L\to K$ making $(i,h)$ an embedding-retraction-pair, where $i\colon K\to L$ is the inclusion;
\item
\begin{enumerate}
\item[(i)] For all $b\in L$,  ${\downarrow}b\cap K$ is a principal downset;
\item[(ii)] $J(K)\subseteq J(L)$.
\end{enumerate}
\end{enumerate}
\end{proposition}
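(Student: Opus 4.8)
The plan is to reformulate condition (1) in terms of the monotone right adjoint of the inclusion and then isolate the single place where the homomorphism property can fail. Since $i$ is the inclusion, the adjunction $U(i)\dashv U(h)$ of Definition~\ref{def:embedding-retraction-pair} unwinds to $k\le b \iff k\le h(b)$ for all $k\in K$ and $b\in L$, which forces
\[
h(b)=\max(\downset b\cap K).
\]
Thus a monotone right adjoint $h$ to $i$ exists if and only if each down-set $\downset b\cap K$ of $K$ has a greatest element, i.e.\ is principal; this is exactly condition (2)(i). So the whole content of the proposition is to decide when this canonically determined $h$ is a lattice homomorphism, and I claim this happens precisely when (2)(ii) holds.

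First I would record what is automatic. Because the inclusion preserves the bounds, $0_L,1_L\in K$, so $h(0_L)=0_L$ and $h(1_L)=1_L$. Being a right adjoint, $h$ preserves every meet that exists; concretely $h(b_1\wedge b_2)=h(b_1)\wedge h(b_2)$, where $\ge$ is monotonicity and $\le$ follows from $h(b_1)\wedge h(b_2)\in\downset(b_1\wedge b_2)\cap K$. Hence, whenever $h$ exists it is already a bounded meet-homomorphism, and the only thing that can go wrong is preservation of binary joins. This reduces the problem to the equivalence, under (2)(i), between \emph{$h$ preserves finite joins} and condition (2)(ii).

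For the direction (2) $\Rightarrow$ (1) I would prove the nontrivial inequality $h(b_1\vee b_2)\le h(b_1)\vee h(b_2)$. Set $m=h(b_1\vee b_2)\le b_1\vee b_2$. Since $K$ is a finite distributive lattice (a finite sublattice of $L$), $m=\bigvee\{p\in J(K)\mid p\le m\}$. For each such $p$ we have $p\le b_1\vee b_2$, and as $p\in J(K)\subseteq J(L)$ is join-prime in $L$, either $p\le b_1$ or $p\le b_2$; in the first case $p\in\downset b_1\cap K$ so $p\le h(b_1)$, and symmetrically in the second. Hence $p\le h(b_1)\vee h(b_2)$ for every such $p$, so $m\le h(b_1)\vee h(b_2)$, as required. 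For (1) $\Rightarrow$ (2)(ii): take $p\in J(K)$ with $p\le b_1\vee b_2$ in $L$, apply the homomorphism $h$ and use the retraction property $h\circ i=\id_K$ (so $h(p)=p$) to get $p\le h(b_1)\vee h(b_2)$; join-primeness of $p$ in $K$ yields $p\le h(b_i)\le b_i$ for some $i$, so $p$ is join-prime in $L$. Condition (2)(i) for this direction is immediate from the displayed formula for $h$.

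The main obstacle is exactly this join-preservation equivalence: meet-preservation is free for a right adjoint, but joins require genuine input. The decisive point is that finiteness of $K$ lets me reduce to the join-prime generators of $K$, while $J(K)\subseteq J(L)$ is precisely what allows a join-prime lying below $b_1\vee b_2$ to be split according to the join computed in $L$ rather than in $K$. Everything else is routine adjunction bookkeeping, and assembling the four implications gives (1) $\iff$ (2).
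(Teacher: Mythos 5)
Your proof is correct: the reduction of the e-r-p condition to the existence of $h(b)=\max({\downset}b\cap K)$ for every $b\in L$, the observation that bounds and binary meets are then automatic, and the two join-prime arguments (decomposing $h(b_1\vee b_2)$ over $J(K)$ in the finite distributive lattice $K$ for (2)$\Rightarrow$(1), and applying the retraction $h(p)=p$ together with the counit $h(b_j)\le b_j$ for (1)$\Rightarrow$(2)(ii)) establish the equivalence with no gaps. The paper states Proposition~\ref{prop:fin-lat-e-r-p} without proof, deferring to the book in preparation \citep{GvGprepub}, so there is no written argument to compare against; yours is the natural proof the statement calls for, correctly isolating join-preservation as the only non-automatic point.
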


\section{Quantifiers, free constructions and duality}\label{s:Quant}

In the categorical logic approach, cf.\ Sections~\ref{s:algebras-from-logic} and~\ref{s:three-ex-spaces}, the stratification of the algebra of formulas (up to logical equivalence modulo $T$) provided by the hyperdoctrine $P\colon \Con^\op\to\BA$ is in a sense impredicative. Indeed, it starts from the algebra of sentences $P(\emptyset)$, which is what we ultimately want to understand, to build all formulas on a countably infinite set of variables. 
This contrasts with the step-wise construction of algebras of formulas outlined in Section~\ref{s:modal-Vietoris}.

We want to understand quantification as a step-by-step construction. To this end, in this section we analyse from a duality theoretic viewpoint the inductive process of applying a layer of quantifiers in three settings. First, we focus on existential quantification in first-order logic over arbitrary structures. Then, on semiring and probabilistic quantifiers in first-order logic over finite structures. 

As explained in Section~\ref{s:algebras-from-logic}, Lindenbaum-Tarski algebras of predicate logics typically fail to be free algebras. The challenge then consists, in a sense, in building free objects which approximate the Lindenbaum-Tarski algebra we are interested in. We illustrate this idea in the following examples.

\subsection{Existential quantification and Vietoris}\label{s:exists-vietoris}
For existential quantification in first-order logic, the framework can be loosely described as follows. Assume we are given a Boolean algebra of formulas $B$, and we build a new Boolean algebra $B_{\exists x}$ by adding a layer of the quantifier $\exists x$ to the formulas in $B$. We then have a quotient map
\[\begin{tikzcd}
\MA(B) \arrow[twoheadrightarrow]{r} & {B_{\exists x}}
\end{tikzcd}\]
sending $\Diamond \phi$ to $\exists x.\phi$, where $\MA(B)$ is the Boolean algebra obtained by freely adding one layer of modality as described in Section~\ref{s:modal-Vietoris}. Dually, we get a continuous embedding
\[\begin{tikzcd}
\V(X) \arrow[hookleftarrow]{r} & {X_{\exists x}}
\end{tikzcd}\]
where $X$ and $X_{\exists x}$ are the dual spaces of $B$ and $B_{\exists x}$, respectively. We have approximated the space $B_{\exists x}$ by means of the Vietoris space $\V(X)$, whose dual is a \emph{free} object (namely, the free modal algebra on $B$). The problem then consists in characterising $X_{\exists x}$ as a subspace of $\V(X)$. This is addressed by observing that $X_{\exists x}$ is the image of a continuous map into $\V(X)$ constructed in a canonical way. In the remaining of this section we provide the necessary details.

Recall from Section~\ref{s:three-ex-spaces} that a first-order formula $\phi \in \FOA(T)$ can be identified with the set $\sem{\phi}\subseteq \ModA/{\equiv}$ consisting of the (equivalence classes of) models with assignments satisfying $\phi$. If the free variables of $\phi$ are contained in $v_1,\dots,v_n$, we can restrict the variable assignments accordingly. Write
\[
\Mod_n=\{[(A,\alpha\colon \{v_1,\ldots,v_n\}\to A)]\mid A \ \text{is a $\Sg$-structure and} \ A\models T\},
\]
where $[(A,\alpha)]=[(A',\alpha')]$ if and only if $A,\alpha\models \phi \Leftrightarrow A',\alpha'\models \phi$ for every $\phi\in \FO_n(T)$. Henceforth, we abuse notation and denote an arbitrary element of $\Mod_n$ by $(A,\alpha)$ instead of $[(A,\alpha)]$. Then, $\FO_n(T)$ embeds into $\P(\Mod_n)$ via the map
\begin{align*}
\FO_n(T) \hookrightarrow \P(\Mod_n), \ \ [\phi]\mapsto \sem{\phi}_n=\{(A,\alpha)\in \Mod_n \mid A,\alpha\models \phi\}.
\end{align*}
The projection map
\[
\pi_i\colon \Mod_n\epi \Mod_{n\setminus i}
\]
which forgets the value of the assignments on the variable $v_i$ induces a Boolean algebra embedding 
\[
\pi_i^{-1}\colon \P(\Mod_{n\setminus i}) \mono \P(\Mod_n)
\] 
by applying the contravariant power-set functor. As in the hyperdoctrine approach, the homomorphism $\pi_i^{-1}$ has a lower adjoint and it is given by taking direct images under $\pi_i$.
\[\begin{tikzcd}[column sep=2.0em]
\P(\Mod_{n\setminus i}) \arrow[bend left=35, looseness=1]{rr}[description]{\pi_i^{-1}} & {\footnotesize{\text{$\top$}}} &  \P(\Mod_n) \arrow[bend left=35, looseness=1]{ll}[description]{\pi_i(\ARG)}
\end{tikzcd}\]
This lower adjoint map can be thought of as the quantifier $\exists v_i$. Indeed, it is readily seen that $\pi_i(\sem{\phi}_n)=\sem{\exists v_i. \phi}_{n\setminus i}$.
More generally, abstracting away from the Boolean subalgebra $\FO_n(T) \mono \P(\Mod_n)$, we can consider any Boolean algebra embedding
\[
j\colon B\hookrightarrow \P(\Mod_n)
\]
and regard it as a `semantically given logic'. The Boolean algebra obtained by adding a layer of the quantifier $\exists v_i$ to $B$ can be identified with the Boolean subalgebra $B_{\exists}^i$ of $\P(\Mod_{n\setminus i})$ generated by the set of direct images
\[
\{\pi_i(j(\phi))\mid \phi\in B\}.
\]

We now focus on the dual of the transformation $B\leadsto B_{\exists}^i$. Let $f\colon \beta(\Mod_n)\epi X$ be the continuous map dual to $j\colon B\hookrightarrow \P(\Mod_n)$. Here, $\beta(\Mod_n)$ denotes the \v{C}ech-Stone compactification of $\Mod_n$ regarded as a discrete space, and is the dual Stone space of $\P(\Mod_n)$. We obtain a continuous map
\[\begin{tikzcd}[column sep=3.5em]
R\colon \beta(\Mod_{n\setminus i}) \arrow{r}{\beta(\pi_i)^{-1}} & \V(\beta(\Mod_n)) \arrow{r}{\V(f)} & \V(X). 
\end{tikzcd}\]
The first component of $R$ is the preimage map $x\mapsto \beta(\pi_i)\inv(x)$, where the function $\beta(\pi_i)\colon \beta(\Mod_n) \to \beta(\Mod_{n\setminus i})$ is the Stone dual of $\pi_i^{-1}\colon \P(\Mod_{n\setminus i}) \to \P(\Mod_n)$. The map $\beta(\pi_i)^{-1}$ is continuous because $\pi_i^{-1}$ has a lower adjoint. Indeed, the join-semilattice homomorphism $\pi_i(\ARG)\colon \P(\Mod_n)\to \P(\Mod_{n\setminus i})$ induces a Boolean algebra homomorphism $\MA(\P(\Mod_n))\to \P(\Mod_{n\setminus i})$, whose dual map is precisely $\beta(\pi_i)^{-1}$.

We then have the following result.
\begin{proposition}\label{p:exists-Vietoris}
The image of the continuous map $R\colon \beta(\Mod_{n\setminus i}) \to \V(X)$ is the dual space of $B_{\exists}^i$.
\end{proposition}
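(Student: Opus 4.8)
The plan is to recognise the map $R$ as the Stone dual of a single Boolean algebra homomorphism out of the free modal algebra $\MA(B)$, and then to read off its image by the standard epi--mono factorisation of dual maps.

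First I would assemble the relevant homomorphism. Since $j\colon B\hookrightarrow\P(\Mod_n)$ is a Boolean homomorphism and the direct-image map $\pi_i(\ARG)\colon\P(\Mod_n)\to\P(\Mod_{n\setminus i})$ preserves finite joins and the bottom element, the composite $\pi_i\circ j\colon B\to\P(\Mod_{n\setminus i})$ preserves finite joins and $0$. By the universal property of $\MA(B)$ recalled in Section~\ref{s:modal-Vietoris}, it extends uniquely to a Boolean homomorphism
\[
g\colon\MA(B)\to\P(\Mod_{n\setminus i}),\qquad \Diamond\phi\mapsto\pi_i(j(\phi)).
\]
As $\MA(B)$ is generated as a Boolean algebra by the elements $\Diamond\phi$, the image of $g$ is precisely the Boolean subalgebra of $\P(\Mod_{n\setminus i})$ generated by $\{\pi_i(j(\phi))\mid\phi\in B\}$, that is, $B_{\exists}^i$.

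Next I would identify $R$ with the Stone dual $g^{*}$ of $g$. Both components of $R$ are duals of known homomorphisms: because $\V$ is the Stone dual of the functor $\MA$ and $f$ is dual to $j$, the map $\V(f)$ is dual to $\MA(j)\colon\MA(B)\to\MA(\P(\Mod_n))$; and, as noted in the text preceding the statement, $\beta(\pi_i)^{-1}$ is dual to the homomorphism $\widehat{\pi_i}\colon\MA(\P(\Mod_n))\to\P(\Mod_{n\setminus i})$, $\Diamond S\mapsto\pi_i(S)$. Since Stone duality reverses composition, the dual of $R=\V(f)\circ\beta(\pi_i)^{-1}$ is $\widehat{\pi_i}\circ\MA(j)$, which sends $\Diamond\phi\mapsto\Diamond j(\phi)\mapsto\pi_i(j(\phi))$, i.e.\ exactly $g$. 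Thus $R=g^{*}$. To conclude, factor $g$ as the surjection $\MA(B)\epi B_{\exists}^i$ followed by the inclusion $B_{\exists}^i\hookrightarrow\P(\Mod_{n\setminus i})$. Dualising, $R=g^{*}$ factors as the surjection $\beta(\Mod_{n\setminus i})\epi X_{B_{\exists}^i}$ (dual to the inclusion) composed with the subspace embedding $X_{B_{\exists}^i}\hookrightarrow\V(X)$ (dual to the surjection onto $B_{\exists}^i$). As the first factor is onto, the image of $R$ equals the image of this embedding, namely the dual space of $B_{\exists}^i$ sitting inside $\V(X)$; since $\beta(\Mod_{n\setminus i})$ is compact, this image is closed, and the corestriction of $R$ is a continuous bijection from a compact space onto a Hausdorff one, hence a homeomorphism onto $X_{B_{\exists}^i}$.

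I expect the main obstacle to be the bookkeeping in the middle step: correctly tracking which homomorphism each factor of $R$ is dual to, and in particular verifying the functorial identity that $\V(f)$ corresponds to $\MA(j)$, so that the two diamonds---in $\MA(B)$ and in $\MA(\P(\Mod_n))$---compose to recover $g$. Once the identity $R=g^{*}$ is in place, the identification of the image is a formal consequence of the epi--mono factorisation of Stone duals.
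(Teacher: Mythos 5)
Your proof is correct, but it reaches the conclusion by a more functorial route than the paper's. The paper's own proof hinges on the concrete identity $R^{-1}(\Diamond \w{\phi})=\w{\pi_i(j(\phi))}$ for $\phi\in B$, verified at the level of points (with the verification outsourced to Corollary 3.2 of \citep{BG2019}), and then immediately concludes that the dual algebra of the (closed) image of $R$ is the subalgebra of $\P(\Mod_{n\setminus i})$ generated by the sets $\pi_i(j(\phi))$, i.e.\ $B_{\exists}^i$. You never compute a preimage: you construct $g\colon \MA(B)\to\P(\Mod_{n\setminus i})$, $\Diamond\phi\mapsto \pi_i(j(\phi))$, from the universal property of $\MA(B)$, identify $R$ with the Stone dual of $g$ by dualising the two factors of $R$ separately, and finish with the epi--mono factorisation of $g$. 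Both arguments ultimately rest on the same key fact---that the dual homomorphism of $R$ sends the generator $\Diamond\phi$ to $\pi_i(j(\phi))$---but yours derives it from functoriality and universal properties rather than a pointwise computation, which makes the proof self-contained (no external citation) and makes explicit the factorisation step that the paper compresses into a single `consequently'. The price is that your middle step leans on the naturality of the $\MA$/$\V$ duality, i.e.\ that $\V$ is the dual of $\MA$ \emph{as a functor}, so that $\V(f)$ is dual to $\MA(j)$; this is implicit in Section~\ref{s:modal-Vietoris} and is exactly the bookkeeping you flag as the main obstacle, so it deserves the care you give it. Your closing observation that the corestriction of $R$ is a homeomorphism onto its image is not needed for the statement, but it is correct and harmless.
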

\begin{proof}
 It is not difficult to verify that $R^{-1}(\Diamond \w{\phi})=\w{\pi_i(j(\phi))}$ for every $\phi\in B$, see e.g.\ Corollary 3.2 of \citep{BG2019}. Consequently, the Boolean algebra dual to the image of $R$ can be identified with the subalgebra of $\P(\Mod_{n\setminus i})$ generated by the elements of the form $\pi_i(j(\phi))$ for $\phi\in B$, which is precisely $B_{\exists}^i$.
\end{proof}

To sum up, the transformation $B\leadsto B_{\exists}^i$ which adds one layer of quantifier $\exists v_i$ dually corresponds to taking the image of the continuous map $R\colon \beta(\Mod_{n\setminus i})\to \V(X)$, canonically constructed from the continuous function $f\colon \beta(\Mod_n)\epi X$. For a step-by-step treatment of quantifiers, we now want to add to $B_{\exists}^i$ the formulas which were already in $B$. Hence, we take the Boolean subalgebra of $\P(\Mod_n)$ generated by the union $B\cup B_{\exists}^i$, which coincides with the image of the obvious Boolean algebra homomorphism $B+B_{\exists}^i\to \P(\Mod_n)$. This corresponds, dually, to taking the image of the continuous  product map
\[\begin{tikzcd}[column sep=5.5em]
\beta(\Mod_n) \arrow{r}{(R\circ \beta(\pi_i))\times f} & \V(X)\times X.
\end{tikzcd}\]
An essential obstacle to a two-sided duality theory for quantifiers is the lack of a characterisation of the continuous maps $\beta(\Mod_{n})\to \V(X)\times X$ arising this way. We will return to this point in Section~\ref{s:outlook}.

\subsection{Semiring quantifiers and measures}\label{s:semiring-quant}
The existential quantifier $\exists$ captures the existence, or non-existence, of an element satisfying a property. As such, it is a two-valued query. Semiring quantifiers, as studied for instance in logic on words, generalise $\exists$ by allowing us to count the number of witnesses in a given semiring.\footnote{A particular class of semiring quantifiers is given by the modular quantifiers, which count in a finite cyclic ring $\mathbb{Z}/q\mathbb{Z}$. These were introduced in logic on words in \citep{STRAUBINGTT}.} Recall that a \emph{semiring} is a tuple $(S,+,\cdot,0,1)$ where $(S,+,0)$ is a commutative monoid, $(S,\cdot,1)$ is a monoid, the operation $\cdot$ distributes over $+$, and $0\cdot s=0=s\cdot 0$ for all $s\in S$.
If $S$ is a fixed finite semiring, every element $k\in S$ determines a quantifier $\exists_{k}$. Given a first-order formula $\phi$ with one free variable $v$ and a finite structure $A$, the semantics of the sentence $\exists_{k}v.\phi(v)$ is given as follows:
\begin{align*}
    A \models \exists_{k}v.\phi(v)
    & \qtq{iff} 1+\cdots+1 \text{ (repeated $m$-times) is equal to $k$ in $S$} \\
    & \text{where $m$ is the number of elements $a\in A$ such that $A\models \phi(a)$.}
\end{align*}
Notice that $A$ must be finite, for otherwise the set $\{a\in A\mid A\models \phi(a)\}$ may be infinite and the sum $1+\cdots+1$ undefined. This problem could be overcome by requiring that $S$ be complete in an appropriate sense. The existential quantifier $\exists$ is recovered by letting $S=\two$ be the two-element Boolean ring and $k=1$.

Let $\Fin_n$ be the subset of $\Mod_n$ consisting of the finite models with assignments. Given a Boolean algebra embedding $j\colon B\hookrightarrow \P(\Fin_n)$ we can construct, akin to the case of $\exists$, a Boolean algebra $B_{\exists_S}^i$ obtained by adding a layer of semiring quantifiers $\exists_{k}v_i$ for $k\in S$. For every $\phi\in B$ and $(A,\alpha)\in \Fin_{n\setminus i}$, write $m_{\phi,(A,\alpha)}$ for the number of elements $a$ in $A$ such that $(A, \alpha\cup \{v_i \mapsto a\})$ belongs to $j(\phi)$. Then, $B_{\exists_S}^i$ can be defined as the Boolean subalgebra of $\P(\Fin_{n\setminus i})$ generated by the sets
\[
\{(A,\alpha) \in \Fin_{n\setminus i} \mid 1 + \cdots + 1 \text{ ($m_{\phi,(A,\alpha)}$-times) is equal to $k$} \}, \ \text{for} \ \phi\in B \ \text{and} \ k\in S.
\]

In order to describe the dual of the transformation $B\leadsto B_{\exists_S}^i$, we need to understand which construction plays the role of the Vietoris hyperspace in the case of semiring quantifiers. For this purpose, notice that the Vietoris space $\V(X)$ can be identified with a space of two-valued finitely additive measures on $X$, whenever $X$ is a Boolean space.\footnote{Perhaps more natural would be to first identify $\V(X)$ with the space of filters on the dual Boolean algebra of $X$, as explained towards the end of Section~\ref{s:three-ex-spaces} in the case of the Smyth powerspace, and then observe that filters can be seen as two-valued finitely additive measures.} Regard $X$ as a measurable space where the measurable subsets are precisely the clopens, i.e.\ the elements of the Boolean algebra $B$ dual to $X$. A finitely additive $\two$-valued measure on $X$ is then a function $\mu\colon B\to \two$ satisfying
\[
\mu(0)=0 \qtq{and} \mu(a\vee b)\vee \mu(a\wedge b)=\mu(a)\vee \mu(b)  \ \ \forall a,b\in B.
\]
Denote by $\M(X,\two)$ the collection of all finitely additive $\two$-valued measures on $X$, and equip it with the subspace topology induced by the product topology on $\two^B$.
\begin{proposition}
For every Boolean space $X$, the Vietoris hyperspace $\V(X)$ is homeomorphic to $\M(X,\two)$ via the map
\[
\V(X)\to \M(X,\two), \ \ C\mapsto \mu_C, \ \ \text{where} \ \ \mu_C(a)=\begin{cases} 1 & \mbox{if \ $\w{a}\cap C\neq \emptyset$,} \\ 0 & \mbox{otherwise.} \end{cases}
\]
\end{proposition}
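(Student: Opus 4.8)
The plan is to exhibit $C \mapsto \mu_C$ as a continuous bijection from the compact space $\V(X)$ onto the Hausdorff space $\M(X,\two)$, and then invoke the fact that a continuous bijection from a compact space to a Hausdorff space is automatically a homeomorphism. Throughout I use that, under the isomorphism $B\cong B_X$, the assignment $a\mapsto \w a$ is a Boolean algebra isomorphism onto the clopens of $X$, so in particular $\w 0=\emptyset$, $\w{a\vee b}=\w a\cup\w b$, and $\w{a\wedge b}=\w a\cap\w b$. The first, routine, step is to check that $\mu_C$ really lands in $\M(X,\two)$: from $\w 0=\emptyset$ we get $\mu_C(0)=0$, and the equivalence $\mu_C(a\vee b)=1 \iff (\w a\cup\w b)\cap C\neq\emptyset \iff \mu_C(a)=1$ or $\mu_C(b)=1$ gives $\mu_C(a\vee b)=\mu_C(a)\vee\mu_C(b)$; since $\w a\cap\w b\subseteq\w a\cup\w b$ forces $\mu_C(a\wedge b)\leq\mu_C(a\vee b)$, the valuation law $\mu_C(a\vee b)\vee\mu_C(a\wedge b)=\mu_C(a)\vee\mu_C(b)$ follows.

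For bijectivity I would use the candidate inverse sending a measure $\mu$ to the closed set $C_\mu=\bigcap\{\w{\neg a}\mid \mu(a)=0\}=X\setminus\bigcup\{\w a\mid \mu(a)=0\}$. Injectivity is then a standard Boolean-space fact: every closed set is the intersection of the clopens containing it, and since $C\subseteq\w{\neg a} \iff \w a\cap C=\emptyset \iff \mu_C(a)=0$, the set $C$ is recovered as $C_{\mu_C}$. For surjectivity I must prove $\mu_{C_\mu}=\mu$ for every $\mu\in\M(X,\two)$. One direction is immediate: if $\mu(a)=0$ then $\w a\subseteq X\setminus C_\mu$, so $\w a\cap C_\mu=\emptyset$ and $\mu_{C_\mu}(a)=0$. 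For the converse, from $\w a\cap C_\mu=\emptyset$ I get $\w a\subseteq\bigcup\{\w b\mid \mu(b)=0\}$, and compactness of the clopen $\w a$ yields a finite subcover, so $a\leq b_1\vee\dots\vee b_k$ with each $\mu(b_i)=0$.

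This converse direction is where the substantive content sits, and I expect it to be the main obstacle. Two facts about $\mu$ are needed. First, $\{c\mid\mu(c)=0\}$ is closed under finite joins: applying the valuation law to $b_i,b_j$ with $\mu(b_i)=\mu(b_j)=0$ forces $\mu(b_i\vee b_j)=0$, and iterating gives $\mu(b_1\vee\dots\vee b_k)=0$. Second, and less obvious, $\mu$ is monotone. This does not follow from the axioms for an arbitrary lattice, but it does here because of the Boolean structure: whenever $a\leq b$ one may write $b=a\vee(b\wedge\neg a)$ with $a\wedge(b\wedge\neg a)=0$, and the valuation law then gives $\mu(b)=\mu(a)\vee\mu(b\wedge\neg a)\geq\mu(a)$. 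Combining the two, $a\leq b_1\vee\dots\vee b_k$ together with $\mu(b_1\vee\dots\vee b_k)=0$ yields $\mu(a)=0$, completing $\mu_{C_\mu}=\mu$ and hence surjectivity.

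Finally I would verify continuity and conclude. The product topology on $\two^B$ is generated by the subbasic sets $\{\mu\mid\mu(a)=1\}$ and $\{\mu\mid\mu(a)=0\}$ for $a\in B$, and their preimages under $C\mapsto\mu_C$ are exactly $\Diamond\w a=\{C\mid \w a\cap C\neq\emptyset\}$ and its complement $(\Diamond\w a)^c$ — precisely the sets generating the Vietoris topology. Hence the map is continuous. Since $\V(X)$ is a Boolean space (in particular compact) and $\M(X,\two)$ is Hausdorff as a subspace of $\two^B$, the continuous bijection $C\mapsto\mu_C$ is a homeomorphism, as claimed.
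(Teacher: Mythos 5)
Your proposal is correct and follows essentially the same route as the paper's proof: exhibit $C\mapsto\mu_C$ as a continuous bijection with the explicit inverse $\mu\mapsto\bigcap\{\w{\neg a}\mid \mu(a)=0\}$ (the same set the paper writes as $\bigcap\{\w{a}\mid \mu(\neg a)=0\}$), and conclude by compactness-plus-Hausdorffness. The paper labels the verification ``straightforward'' and omits it; you have simply supplied those details (the valuation law, monotonicity via $b=a\vee(b\wedge\neg a)$, and the finite-subcover step), all of which are sound.
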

\begin{proof}
It is straightforward to verify that the map in the statement is a continuous bijection, with inverse
$\M(X,\two)\to\V(X)$, $\mu\mapsto \bigcap{\{\w{a}\subseteq X\mid \mu(\neg a)=0\}}$.
Every continuous bijection between compact Hausdorff spaces is a homeomorphism, hence the statement follows. 
\end{proof}
For semiring quantifiers, the hyperspace $\V(X)$ will thus be replaced by $\M(X,S)$, the space of finitely additive $S$-valued measures on $X$. An element of $\M(X,S)$ is a function $\mu\colon B\to S$ satisfying
\begin{equation}\label{eq:fin-add}
\mu(0)=0 \qtq{and} \mu(a\vee b)+ \mu(a\wedge b)=\mu(a)+ \mu(b)  \ \ \forall a,b\in B,
\end{equation}
and the set $\M(X,S)$ is equipped with the subspace topology induced by the product topology on $S^B$. The equations in~\eqref{eq:fin-add}, encoding finite additivity, translate into equaliser diagrams in the category of Boolean spaces. Hence, the resulting space $\M(X,S)$ is again Boolean. Explicitly, the topology of $\M(X,S)$ is generated by the (clopen) subsets of the form
\[
[a,k]=\{\mu\in\M(X,S)\mid \mu(a)=k\}, \ \ \text{for} \ a\in B \ \text{and} \ k\in S.
\]

In order to describe the dual of the construction $B\leadsto B_{\exists_S}^i$, we perform two steps. First, given a finite model with assignment $(A,\alpha)\in \Fin_{n\setminus i}$, let
\begin{equation}\label{eq:fsp-semiring}
\indi_{(A,\alpha)}\colon \Fin_n \to S
\end{equation}
be the `$S$-valued characteristic function' of $\pi_i^{-1}(A,\alpha)$,  
where $\pi_i\colon \Fin_n \to \Fin_{n\setminus i}$ is the map which forgets the assignment of the $i$th variable. That is, $\indi^i_{(A,\alpha)}(A',\alpha')$ is $1$ if $A = A'$ and $\alpha$ agrees with $\alpha'$ on the variables $v_1, \dots, v_{i-1}, v_{i+1}, \dots, v_n$, and $0$ otherwise. Since $A$ is finite, $\indi^i_{(A,\alpha)}$ belongs to the set $\S(\Fin_n)$ of finitely supported $S$-valued functions on $\Fin_n$.
In the second step, in order to construct a measure, we extend the function $\indi^i_{(A,\alpha)}$ to subsets of $\Fin_n$ by adding up all the non-zero values in a given subset. More generally, if  $T$ is a set and $g\colon T\to S$ is a finitely supported function, the map
\[ \int g\colon \P(T) \to S,\quad P \mapsto \int_P g \qtq{computed as} \sum_{x\in P} g(x) \]
is a finitely additive $S$-valued measure on $\beta(T)$. We obtain an integration map\footnote{In fact, the construction $X\mapsto \M(X,S)$ yields a monad on $\BStone$ and the integration map can be upgraded to a monad morphism $\int\colon \S\circ U\to U\circ \M(-,S)$, where $\S$ is the semiring monad on $\Set$ and $U\colon \BStone\to \Set$ is the forgetful functor. Cf.\ \citep{GPR2017}. While, for the purpose of this section, we may assume $S$ is any pointed monoid, the monadic treatment requires the full semiring structure.}
\[
\int\colon \S(T) \to \M(\beta(T), S).
\]

Now, let $f\colon \beta(\Fin_n) \to X$ be the dual of the embedding $j\colon B \mono \P(\Fin_n)$. 
Consider the composite
\begin{equation}\label{eq:Fin-to-meas-semirings}
    \Fin_{n\setminus i} \xrightarrow{\ee{\indi^i_{(\ARG)}}} \S(\Fin_n) \xrightarrow{\ee\int} \M(\beta(\Fin_n), S) \xrightarrow{\ee{f_*}} \M(X,S)
\end{equation}
where $f_*$ sends a measure to its pushforward along $f$, i.e.\ $f_*(\mu)(a)=\mu(f^{-1}(\w{a}))$ for every $\mu\in \M(\beta(\Fin_n),S)$ and $a\in B$.
The space $\M(X,S)$ is compact and Hausdorff, whence the above composition extends to a (unique) continuous function
\begin{equation}\label{eq:map-R}
R \colon \beta(\Fin_{n\setminus i}) \to \M(X, S).
\end{equation}
The following result generalises Proposition~\ref{p:exists-Vietoris} and can be proved in a similar manner (we omit the details here).
\begin{theorem}\label{t:semiring-quant-measures}
The image of the continuous map $R\colon \beta(\Mod_{n\setminus i}) \to \M(X,S)$ is the dual space of $B_{\exists_S}^i$.
\end{theorem}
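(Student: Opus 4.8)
The plan is to mirror the proof of Proposition~\ref{p:exists-Vietoris}, replacing the Vietoris space and the free modal algebra by the $S$-valued measure space $\M(X,S)$ and its dual Boolean algebra. The underlying general principle is the standard fact of Stone duality that a continuous map $R$ between Boolean spaces factors as a surjection onto its (necessarily closed, being the continuous image of a compact space in a Hausdorff space) image, followed by a closed embedding; dually, this shows that the dual Boolean algebra of $\mathrm{im}(R)$ is precisely the image $R^*(A)$ of the dual homomorphism $R^*$, where $A$ denotes the dual algebra of the codomain $\M(X,S)$. Since the topology of $\M(X,S)$ is generated by the clopens $[\phi,k]=\{\mu\mid\mu(\phi)=k\}$ for $\phi\in B$ and $k\in S$, the algebra $A$ is generated by these clopens, and hence $R^*(A)$ is the subalgebra of $\P(\Fin_{n\setminus i})$ generated by their preimages. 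Everything therefore reduces to identifying $R^{-1}([\phi,k])$.

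First I would evaluate $R$ on the dense discrete subset $\Fin_{n\setminus i}\subseteq\beta(\Fin_{n\setminus i})$ by unwinding the composite~\eqref{eq:Fin-to-meas-semirings}. For a point $(A,\alpha)$, the function $\indi^i_{(A,\alpha)}$ is the characteristic function of the set of extensions of $\alpha$ along $v_i$; integration sends it to the $S$-valued measure $\mu=\int\indi^i_{(A,\alpha)}$ on $\beta(\Fin_n)$; and, since $f$ is dual to $j$ so that $f^{-1}(\w\phi)$ is the clopen corresponding to $j(\phi)$, the pushforward yields $R(A,\alpha)(\phi)=\mu(f^{-1}(\w\phi))=\sum_{x\in j(\phi)}\indi^i_{(A,\alpha)}(x)$. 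Because $\indi^i_{(A,\alpha)}$ takes the value $1$ exactly on the models of the form $(A,\alpha\cup\{v_i\mapsto b\})$ with $b\in A$, this sum is $1+\cdots+1$ repeated $m_{\phi,(A,\alpha)}$-times in $S$, where $m_{\phi,(A,\alpha)}$ is the number of witnesses $b\in A$ with $(A,\alpha\cup\{v_i\mapsto b\})\in j(\phi)$.

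It follows that $R(A,\alpha)\in[\phi,k]$ if and only if $1+\cdots+1$ ($m_{\phi,(A,\alpha)}$-times) equals $k$, which is exactly the condition defining the generating sets of $B_{\exists_S}^i$. As $R^{-1}([\phi,k])$ is clopen in $\beta(\Fin_{n\setminus i})$, and clopens of $\beta(\Fin_{n\setminus i})$ are determined by their traces on the dense subset $\Fin_{n\setminus i}$, the preimage $R^{-1}([\phi,k])$ corresponds to precisely the generator of $B_{\exists_S}^i$ indexed by $\phi\in B$ and $k\in S$. Hence $R^*(A)$ is the subalgebra of $\P(\Fin_{n\setminus i})$ generated by these sets, namely $B_{\exists_S}^i$, and the image of $R$ is its dual space, as required.

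The step I expect to be the main obstacle is the verification that $R^{-1}([\phi,k])$ is the asserted generator, i.e.\ carefully tracking the value $R(A,\alpha)(\phi)$ through the three maps and confirming that it records the witness count $m_{\phi,(A,\alpha)}$ as a genuine sum $1+\cdots+1$ of $1$'s taken $m_{\phi,(A,\alpha)}$ times in $S$, rather than some other aggregate. Once this is in place, the remainder — that the clopens $[\phi,k]$ generate the dual algebra of $\M(X,S)$, and the passage from the dense subset $\Fin_{n\setminus i}$ to all of $\beta(\Fin_{n\setminus i})$ by continuity and density — is routine bookkeeping, exactly as in the two-valued case of Proposition~\ref{p:exists-Vietoris}.
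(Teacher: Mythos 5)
Your proof is correct and follows exactly the strategy the paper intends: the paper omits the details of Theorem~\ref{t:semiring-quant-measures}, saying only that it ``can be proved in a similar manner'' to Proposition~\ref{p:exists-Vietoris}, whose proof likewise identifies preimages of the subbasic clopens under $R$ with the generators of the target subalgebra and then concludes via the image factorisation under Stone duality. Your computation of $R(A,\alpha)(\phi)$ as the $m_{\phi,(A,\alpha)}$-fold sum $1+\cdots+1$ in $S$, giving $R^{-1}([\phi,k])\cap\Fin_{n\setminus i}$ equal to the generator of $B_{\exists_S}^i$ indexed by $\phi$ and $k$, is precisely the semiring analogue of the identity $R^{-1}(\Diamond\w{\phi})=\w{\pi_i(j(\phi))}$ used there (note also that you correctly work with $\Fin_{n\setminus i}$, matching the construction of $R$ in~\eqref{eq:map-R}).
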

The connection between semiring quantifiers and spaces of finitely additive measures was first explored, in the context of logic on words, in \citep{GPR2017}. The treatment in this section could be adapted to deal with any profinite semiring, such as the \emph{tropical semiring} $(\mathbb{N}\cup\{\infty\},\min,+,\infty,0)$, and not just the finite ones. See \citep{R2020}.

\subsection{Probabilistic quantifiers and structural limits}
Topological methods are also employed in the study of structural limits in finite model theory. A systematic investigation of limits of finite structures has been developed by Ne{\v s}et{\v r}il and Ossona de Mendez and is based on an embedding, called the \emph{Stone pairing}, of the collection of finite structures into a space of probability measures \citep{NO2012,NOdM2020}. The latter space is complete, thus it provides the limit objects for those sequences of finite structures which embed as Cauchy sequences. Although this space of measures and the Stone pairing embedding did not originate from duality, in recent work we showed that a closely related version of the Stone pairing can be understood  --- via duality --- as the embedding of finite structures into a space of types. Namely, the space of $0$-types of an extension of first-order logic obtained by adding a layer of certain probabilistic quantifiers \citep{GJR2020}. In the following, we highlight the similarities between the Stone pairing embedding and the space-of-measures construction introduced above in the context of existential and semiring quantification.

For every first-order formula $\phi$ with free variables contained in $v_1,\dots,v_n$, and finite structure $A$, the \emph{Stone pairing} of $\phi$ and $A$ is defined as
\[ \SP{\phi,A} \ = \ \frac{|\{ \overline a \in A^n \mid A \models \phi(\overline a)\}|}{|A|^n}. \]
In other words, $\SP{\phi,A}$ is the probability that a random assignment of the variables $v_1,\dots,v_n$ in $A$ satisfies the formula $\phi$.
Upon fixing the second coordinate, the map $\SP{\ARG,A}$ is a finitely additive measure on the dual space of the Lindenbaum-Tarski algebra of all first-order formulas $\FOA$, with values in the unit interval $[0,1]$. I.e.,
\[
\SP{\bot,A}=0 \qtq{and} \SP{\phi\vee \psi,A}+ \SP{\phi\wedge \psi,A}=\SP{\phi,A}+ \SP{\psi,A}  \ \ \forall \phi,\psi \in\FOA.
\]
Since the Boolean algebra $\FOA$ is dual to the space of models and valuations $\ModA$, we obtain an embedding
\begin{align*}
    \SP{\ARG,\ARG}\colon \Fin \longrightarrow \M(\ModA,[0,1]),\quad A \mapsto \SP{\ARG,A}
\end{align*}
where $\Fin$ is the collection of finite structures, up to isomorphism (with the notation of Section~\ref{s:semiring-quant}, $\Fin=\Fin_0$). This is the \emph{Stone pairing} embedding introduced by \NOdM{}. 

By restricting $\SP{\ARG,A}$ to suitable fragments of first-order logic, \NOdM{} obtained a unifying framework that captures various notions of convergence of finite structures, such as Lovasz--Szegedy convergence, Benjamini--Schramm convergence, elementary convergence, etc.\footnote{Note that the restriction of the Stone pairing embedding to a fragment of FO may fail to be injective.} Their insight was that each of these notions of convergence corresponds to a fragment of first-order logic. Further, since the ensuing spaces of finitely additive measures are complete, they admit a limit for every sequence of finite structures which embeds as a Cauchy sequence. 

In section~\ref{s:semiring-quant}, we defined a map from a set of finite structures with evaluations into a space of finitely additive measures, see equation~\eqref{eq:Fin-to-meas-semirings}, and showed that it dually captures the adding of a layer of semiring quantifiers. By analogy, we may ask if the Stone pairing also corresponds to applying a layer of quantifiers. One immediate obstacle is that the spaces $[0,1]$ and $\M(\ModA,[0,1])$ are not Boolean, whence not amenable to the methods of Stone duality for Boolean algebras.

We can overcome this problem by replacing $[0,1]$ with a profinite version of the unit interval obtained from a codirected system of finitary approximations of real numbers in $[0,1]$. This profinite space $\G$ is naturally equipped with a Priestley space structure and can therefore be studied using Stone-Priestley duality for distributive lattices.  To define $\G$, we divide the unit interval into $n$ segments of equal length, i.e.
\[ \G_n \ee= \{ 0 \ee< \tfrac{1}{n} \ee< \tfrac{2}{n} \ee< \dots \ee< 1\}. \]
The chain $\G_n$ provides a finite approximation of $[0,1]$. The higher the value of $n\in\N$, the better the approximation is. Whenever $n\mid m$, we consider the flooring function $\G_m\to \G_n$ sending $\frac{a}{m}$ to the largest $\frac{b}{n} \in \G_n$ such that $\frac{b}{n} \leq \frac{a}{m}$. Note that the finite chains $\G_n$ with flooring functions between them form a codirected diagram in the category $\Posf$ of finite posets with monotone maps. The limit of this diagram is an object $\G$ of the pro-completion of $\Posf$, which is the category of Priestley spaces with continuous monotone maps.\footnote{A \emph{Priestley space} is a pair $(X,\leq)$ where $X$ is a compact space and $\leq$ is a partial order such that, whenever $x\not\leq y$, there is a clopen subset $C\subseteq X$ which is upward closed and satisfies $x\in C$ and $y\notin C$.} See e.g.\ Corollary~VI.3.3 in \citep{Johnstone1986}.

Concretely, the elements of $\G$ are the sequences of approximations $(x_n)_n\in\prod_{n\in\N}{\G_n}$ which are compatible with the flooring functions. Every $q\in (0,1]$ determines an element $q\mm\in\G$, namely the sequence
\[
q\mm=(q_1\mm,q_2\mm,q_3\mm,\ldots) \qtq{where} q_n\mm=\max \{ \tfrac{a}{n} \in \G_n \mid \tfrac{a}{n} < q \}
\]
which approximates $q$ from below while never reaching it. Further, if $q$ is rational, we also get a lower approximating sequence $q\cc\in\G$ which eventually stabilises at $q$:
\[
q\cc=(q_1\cc,q_2\cc,q_3\cc,\ldots) \qtq{where} q_n\cc=\max \{ \tfrac{a}{n} \in \G_n \mid \tfrac{a}{n} \leq q \}.
\]
In fact, any point of $\G$ is of one of these two types. We can thus think of $\G$ as a copy of the unit interval where all the non-zero rationals are doubled (in the picture, $q$ is rational while $r$ is irrational):
\begin{center}
\begin{tikzpicture}
    \node at (6.15,0) (1cc) {};
    \node at (6,0) (1mm) {};
    \node at (0,0) (0cc) {};
    \node at (4.4,0) (r) {};
    \node at (1.75,0) (qc) {};
    \node at (1.60,0) (qm) {};
    \node at ($(r) -(-0.1,0.5)$) (rmm) {$r\mm$};
    \node at ($(qc)+(0.15,0.5)$) (qcc) {$q\cc$};
    \node at ($(qm)-(0.05,0.5)$) (qmm) {$q\mm$};
    \node at ($(1cc)+(0.1,0.5)$) {$1\cc$};
    \node at ($(1mm)-(0.05,0.5)$) {$1\mm$};
    \node at ($(0cc)+(0.15,0.5)$) {$0\cc$};

    \draw[densely dotted] (1mm.center) -- (qc.center);
    \draw[densely dotted] (qm.center) -- (0cc.center);

    \foreach \pt in {1cc,1mm,0cc,r,qm,qc} {
        \draw ($(\pt)-(0,0.1)$) -- ($(\pt)+(0,0.1)$);
    }
    \node at (-1.1, 0) {$\G\enspace =$};
\end{tikzpicture}
\end{center}

Equivalently, $\G$ is a copy of the Cantor space with an extra top element which is topologically isolated (corresponding to $1\cc$). The natural order of $\G$, illustrated in the previous picture, is the total order defined by the two conditions
\begin{itemize}
    \item $r\cc < s\mm$ if and only if $r < s$ in $[0,1]$, and
    \item $q\mm < q\cc$ for every $q\in (0,1]$,
\end{itemize}
and its topology is the interval topology. Note that $\G$ retracts onto $[0,1]$. Indeed, the continuous surjection 
\[
\gamma\colon \G\to[0,1], \ \ q\mm,q\cc\mapsto q
\] 
has a (lower semicontinuous) section
\[
\iota\colon [0,1]\to \G, \ \ \iota(q)=\begin{cases}
       q\cc & \text{if $q$ is rational} \\
       q\mm & \text{otherwise}.
   \end{cases}
\]

The additive structure of $[0,1]$ lifts to $\G$ (as can be derived by duality for additional operators) so that it makes sense to consider the set $\M(X,\G)$ of finitely additive probability measures on a Boolean space $X$ with values in $\G$. This construction can be generalised to any Priestley space $X$, and it turns out that the assignment $X\mapsto \M(X,\G)$ is an endofunctor on the category of Priestley spaces. In particular, a continuous monotone map of Priestley spaces $f\colon X\to Y$ is sent to the map
\[
f_*\colon \M(X,\G)\to \M(Y,\G)
\]
taking a measure to its pushforward along $f$. Furthermore, the retraction-section pair $\gamma\colon \G\leftrightarrows [0,1]\cocolon \iota$ lifts to a retraction-section pair 
\[
\gamma^\#\colon \M(X,\G)\leftrightarrows \M(X,[0,1])\cocolon \iota^\#, \qtq{where} \gamma^\#(\mu)=\gamma \circ \mu \qtq{and} \iota^\#(\mu)=\iota\circ\mu.
\]

Now we define a $\G$-valued variant of the Stone pairing by following the strategy set out in Section~\ref{s:semiring-quant} in the case of semiring quantifiers.
Fix $n\in\N$, and let $\F(\Fin_n, \G)$ be the set of finitely supported functions $\Fin_n\to\G$ with total value $1\cc$. We get a map $\delta_{(\ARG)}\colon \Fin \to \F(\Fin_n, \G)$ sending a finite structure $A$ to 
\[ \delta_A\colon \Fin_n \to \G, \qtq{where} \delta_A(A',\alpha') =
   \begin{cases}
       \left(\frac{1}{|A|^n}\right)\cc & \text{if } A' = A \\[0.7em]
       0\cc & \text{otherwise}.
   \end{cases}
\]
The map $\delta_{(\ARG)}$ is the (normalized) $\G$-valued version of the function introduced in~\eqref{eq:fsp-semiring} for semiring quantifiers. In a similar way, to move from finitely supported functions to measures, for every set $T$ we consider the integration map
\[ 
\int\colon \F(T,\G) \to \M(\beta(T),\G),\quad f\mapsto \int f. 
\]
Lastly, define the following composition
\[\begin{tikzcd}
R_n\colon \Fin \arrow{r}{\delta_{(\ARG)}} & \F(\Fin_n,\G) \arrow{r}{\int} & \M(\beta(\Fin_n),\G) \arrow{r}{f_*} & \M(\Mod_n, \G) 
\end{tikzcd}\]
where $f\colon \beta(\Fin_n) \to \Mod_n$ is the dual map of the Boolean algebra homomorphism 
\[
\FO_n \to \P(\Fin_n), \ \ \phi \mapsto \sem{\phi}\cap \Fin_n.
\] 
The map $R_n$ can be extended to a continuous function $\widetilde{R}_n\colon \beta(\Fin)\to \M(\Mod_n,\G)$, corresponding to the map in~\eqref{eq:map-R}. Using the fact that the space $\ModA$ is the codirected limit of the $\Mod_n$'s for $n\in\N$, and the functor $\M(\ARG,\G)$ preserves codirected limits, we can `glue' the maps $\widetilde{R}_n$ to get a continuous function $\widetilde{R}\colon \beta(\Fin) \to \M(\ModA,\G)$. The restriction $R\colon \Fin\to\M(\ModA,\G)$ of $\widetilde{R}$ is an equivalent $\G$-valued version of the Stone pairing, as expressed by the commutativity of the following diagram.
\[\begin{tikzcd}[row sep=2em]
{} & \M(\ModA,\G)\ar[bend left=25]{dd}{\gamma^\#} \\
\Fin \ar{ru}{R}\ar[swap]{rd}{\SP{\ARG,\ARG}} & \\
& \M(\ModA,[0,1])\ar[bend left=25]{uu}{\iota^\#}
\end{tikzcd}\]
The map $R$, and more precisely the way it is constructed, provides an interesting link between the theory of structural limits and the inductive study of semiring quantifiers. Further, the duality approach allows us to see (the $\G$-valued version of) the Stone pairing as an embedding of the finite structures into a space of types. This is the content of the following theorem, which is a special case of more general results in \citep{GJR2020}.
\begin{theorem}
The Boolean space $\M(\ModA,\G)$ is dual to the Lindenbaum-Tarski algebra of the propositional logic having as atoms $\PrG q \phi$ and $\PrL q \phi$, for each $\phi \in \FOA$ and $q\in [0,1]\cap \mathbb Q$, and the following inference rules (along with the usual ones for the Boolean connectives):
\begin{equation*}
   \addtolength{\fboxsep}{4pt}
    \boxed{
\begin{gathered}
\infer[{\scriptstyle(\mathrm{if} \ p\, \leq \, q)}]
{\PrG p \phi}{\PrG q \phi}
\hspace{1.5em}
\infer[{\scriptstyle(\mathrm{if} \ \phi \, \vdash \, \psi)}]
{\PrG q \psi}{\PrG q \phi}
\hspace{1.5em}
\infer
{\PrG 0 \bot}{}
\hspace{1.5em}
\infer[{\scriptstyle(\mathrm{if} \ q \, > \, 0)}]
{\PrL q \bot}{}
\hspace{1.5em}
\infer
{\PrG q \top}{}
\hspace{1.5em}
\infer=
{\neg \PrL q \phi}{\PrG q \phi} \\[2ex]
\infer
{\PrG{p+q-r}(\phi\vee \psi) \,\vee\, \PrG{r}(\phi\wedge \psi)}{\PrG p \phi\,\wedge\, \PrG q \psi}
\hspace{1em}
\infer[{\hspace{0.4em}\scriptstyle(\mathrm{if} \ 0 \, \leq \, p+q-r \, \leq \, 1)}]
{\PrG p \phi \,\vee\, \PrG q \psi}{\PrG{p+q-r}(\phi\vee \psi)\,\wedge\,\PrG{r}(\phi\wedge \psi)}
\end{gathered}
}\end{equation*}
\end{theorem}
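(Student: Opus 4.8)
The plan is to exhibit a homeomorphism between $\M(\ModA,\G)$ and the Stone dual of the Lindenbaum--Tarski algebra $\mathcal L$ described in the statement; by Stone duality this is exactly the asserted duality. Since $\mathcal L$ is the free Boolean algebra on the atoms $\PrG q \phi$, $\PrL q \phi$ modulo the congruence generated by the displayed rules, its dual space is---just as in the computations for the Smyth powerspace in Section~\ref{s:three-ex-spaces}---the closed subspace of the Cantor cube $\two^{\mathrm{At}}$ (where $\mathrm{At}$ denotes the set of atoms) consisting of those $\two$-valuations $v$ that validate every rule, each rule $\Gamma\vdash\Delta$ being read as the inequality $\bigwedge\Gamma\leq\bigvee\Delta$ and the two-way rule~$6$ as the corresponding equality. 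I would then define a comparison map $\Phi\colon \M(\ModA,\G)\to\two^{\mathrm{At}}$ by setting $\Phi(\mu)(\PrG q \phi)=1$ precisely when $\mu(\sem{\phi})\geq\iota(q)$ in $\G$, and $\Phi(\mu)(\PrL q \phi)=1$ precisely when $\mu(\sem{\phi})<\iota(q)$; here $\mu(\sem{\phi})\in\G$ is the value of $\mu$ on the clopen dual to $\phi\in\FOA$ and $\iota(q)=q\cc$ is the canonical lift of the rational threshold $q$.

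First I would verify that $\Phi$ is continuous and factors through the rule-respecting subspace. Continuity is immediate: evaluation $\mu\mapsto\mu(\sem{\phi})$ is a coordinate projection for the product topology on $\G^{\FOA}$, and each principal up-set $\{x\in\G\mid x\geq q\cc\}$ is clopen in $\G$ (its complement is the down-set of the immediately preceding point $q\mm$), so every coordinate of $\Phi$ is continuous into $\two$. That $\Phi(\mu)$ validates the rules is then a rule-by-rule check: rules~$1$ and~$2$ are monotonicity of the threshold and of $\mu$ respectively (the latter following from additivity together with nonnegativity of $\G$-valued measures), rules~$3$--$5$ are the normalisations $\mu(\sem{\bot})=0\cc$ and $\mu(\sem{\top})=1\cc$ of a probability measure, and rule~$6$ holds by the very definition of $\Phi$. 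The real content lies in the additivity rules~$7$ and~$8$, which encode the identity $\mu(\sem{\phi\vee\psi})+\mu(\sem{\phi\wedge\psi})=\mu(\sem{\phi})+\mu(\sem{\psi})$ in $\G$. Writing $a,b,c,d$ for these four values, rule~$7$ reads ``$a\geq p$ and $b\geq q$ imply $c\geq p+q-r$ or $d\geq r$'' and rule~$8$ is its converse shape; both are consequences of finite additivity once the thresholds are chosen tightly (for rule~$7$ push $p,q$ up to $a,b$ and $r$ down to $d$, obtaining $c\geq a+b-d$; for rule~$8$ set $p+q=c+d$ and $r=d$ and argue by contradiction that no $p>a$, $q>b$ can satisfy its hypothesis).

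Injectivity of $\Phi$ follows because the clopen up-sets $\{x\geq q\cc\}$, for $q\in[0,1]\cap\mathbb Q$, separate the points of the Priestley space $\G$; hence $\Phi(\mu)=\Phi(\nu)$ forces $\mu(\sem{\phi})=\nu(\sem{\phi})$ for all $\phi$, i.e.\ $\mu=\nu$. For surjectivity onto the rule-respecting subspace, given such a valuation $v$ I would reconstruct a candidate measure by reading off, for each $\phi$, the cut $Q_\phi=\{q\mid v(\PrG q \phi)=1\}$. By rule~$1$ this set is downward closed, and it is nonempty because $\PrG 0 \phi$ is derivable (from $\PrG 0 \bot$ and $\bot\vdash\phi$ via rule~$2$); it therefore determines a unique element $\mu(\sem{\phi})\in\G$, namely $q_0\cc$ if $Q_\phi$ has a maximum $q_0$ and $s\mm$ if $Q_\phi=[0,s)\cap\mathbb Q$. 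One then checks, using rules~$3$--$5$ for normalisation and rules~$7$--$8$ for additivity (now run in the reverse direction to produce the real identity $\hat c+\hat d=\hat a+\hat b$, where $\hat a=\sup Q_\phi$, $\hat b=\sup Q_\psi$, $\hat c=\sup Q_{\phi\vee\psi}$, $\hat d=\sup Q_{\phi\wedge\psi}$), that $\mu$ is a finitely additive $\G$-valued probability measure with $\Phi(\mu)=v$. Finally, $\M(\ModA,\G)$ is a genuine Boolean space: on probability measures the pointwise Priestley order collapses to equality, since $\mu\leq\nu$ forces also $\mu(\neg a)\leq\nu(\neg a)$, and $\mu(a)+\mu(\neg a)=1\cc=\nu(a)+\nu(\neg a)$ then gives $\mu(a)=\nu(a)$. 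Thus $\Phi$ is a continuous bijection onto the dual space of $\mathcal L$ between compact Hausdorff spaces, hence a homeomorphism, as required.

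The step I expect to be the main obstacle is precisely the $\G$-valued finite additivity in its two hard directions (checking that $\Phi(\mu)$ lands in the subspace, and recognising the reconstructed $\mu$ as a measure). The real-number squeeze sketched above is clean, but transporting it to $\G$ demands careful bookkeeping of the doubled rationals $q\mm$ versus $q\cc$, of strict versus non-strict thresholds, and of the boundary conventions for $\PrG s$ when $s\notin[0,1]$ that make rule~$7$ meaningful without a side condition. This is exactly where I would invoke the finitary chain $\G_n$ and the section--retraction pair $\gamma^\#,\iota^\#$ relating $\G$-valued and $[0,1]$-valued measures, so as to reduce the $\G$-level identities to their classical counterparts over $[0,1]$.
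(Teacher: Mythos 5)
First, a point of comparison: the paper itself contains no proof of this theorem --- it explicitly defers to the more general results of \citep{GJR2020}. Measured against that reference, your overall architecture is the right one and matches in spirit: compute the dual of the presented algebra as the closed, rule-respecting subspace of $\two^{\mathrm{At}}$, map $\M(\ModA,\G)$ into it by the threshold correspondence $\Phi(\mu)(\PrG q \phi)=1 \iff \mu(\sem{\phi})\geq q\cc$, and invert by the cut construction $Q_\phi=\{q \mid v(\PrG q \phi)=1\}$. Your continuity argument (the up-sets $\{x\in\G \mid x\geq q\cc\}$ are clopen) and your injectivity argument (these up-sets separate the points of $\G$) are correct.

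The genuine gap is the treatment of $\G$-valued additivity, and it is not the ``careful bookkeeping'' you defer; it is the crux, and your proposed reduction would fail. You never specify the addition on $\G$ nor the precise axioms defining $\M(X,\G)$, and you implicitly use the naive equational reading, e.g.\ $\mu(a)+\mu(\neg a)=1\cc$ in your order-collapse argument. But the only (even separately) continuous extension to $\G$ of rational addition $q\cc + r\cc=(q+r)\cc$ forces $r\mm+s\mm=(r+s)\mm$; hence any measure with an irrational value $\mu(\sem{\phi})=r\mm$ has $\mu(\sem{\phi})+\mu(\sem{\neg\phi})=1\mm\neq 1\cc$, so under the equational reading such functions are not measures at all. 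Yet the dual space of the presented algebra $\mathcal L$ demonstrably contains such points: taking finite structures $A_k$ with $\SP{\phi,A_k}$ increasing to an irrational $r$, any cluster point of the valuations $\Phi(R(A_k))$ has cut $Q_\phi=[0,r)\cap\mathbb Q$, i.e.\ value $r\mm$. So with your reading the theorem would be false; the correct definition of $\M(X,\G)$ (in \citep{GJR2020}) is formulated by inequality-style conditions engineered precisely to match rules~7--8 and to carve out a closed (hence compact) subspace of $\G^{\FOA}$. Relatedly, your plan to ``reduce the $\G$-level identities to their classical counterparts over $[0,1]$'' via $\gamma^\#,\iota^\#$ cannot work: $\gamma$ forgets exactly the $\mm$/$\cc$ (attained versus strictly approximated) information that distinguishes $\G$-additivity from real additivity --- your sup identity $\hat c+\hat d=\hat a+\hat b$ does not determine the $\mm$/$\cc$ types --- and $\iota$, being only lower semicontinuous, does not transport the topology. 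Pinning down the monoid structure on $\G$, the definition of $\M(X,\G)$, and the exact match with rules~7--8 is the actual content of the theorem, not a detail that can be black-boxed.
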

The intended models for this extension of FO are the measures $\mu\in\M(\ModA,\G)$, and the probabilistic quantifiers $\PrG q$ and $\PrL q$ are interpreted as follows:
\[
\mu\models \PrG q \phi \ \Leftrightarrow \ \mu(\phi)\geq q\cc \qtq{and} \mu\models \PrL q \phi \ \Leftrightarrow \ \mu(\phi)< q\cc.
\]
In particular, if $A$ is a finite structure, $\SP{\ARG,A}\models \PrG q \phi$ if and only if $\phi$ is satisfied in $A$ with probability at least $q$. Similarly for $\PrL q \phi$. Note that these probabilistic quantifiers bind all free variables in a formula. Thus, once applied a layer of quantifiers to $\FOA$, we obtain an algebra of \emph{sentences}. These sentences are seen as propositional atoms for a new logic and, by the previous theorem, the Stone pairing can be seen as embedding the collection of finite structures (up to isomorphism) into the space of $0$-types for this logic.

Therefore, we see that \NOdM's Stone pairing dually corresponds to adding a layer of probabilistic quantifiers. As such, it can be regarded as an instance of the inductive approach described in Section~\ref{s:modal-Vietoris}.

\section{Outlook}\label{s:outlook}
We saw in Section~\ref{s:exists-vietoris} that adding a layer of existential quantifier $\exists$ to a Boolean algebra $B$ of first-order formulas (with free variables in $v_1,\ldots,v_n$) dually corresponds to taking the image of a continuous map $\beta(\Mod_{n})\to \V(X)\times X$, where $X$ is the dual Stone space of $B$. A similar statement holds for semiring quantifiers, cf.\ Section~\ref{s:semiring-quant}. This continuous map is defined in a canonical way, and ensures the \emph{soundness} of the construction. But we do not know, so far, how to characterise the continuous maps $\beta(\Mod_{n})\to \V(X)\times X$ arising in this manner, which would establish the \emph{completeness} of the construction. This is a notable obstacle to a full duality theoretic understanding of step-by-step quantification in predicate logics. On the other hand, such a completeness result is available for semiring quantifiers in logic on words, and makes use of the richer structure of the spaces of models (in the form of monoid actions). See Proposition VI.7 and Theorem VI.8 of \citep{GPR2017}, where this is called a `Reutenauer-type theorem'. A question arises, whose answer would significantly further the use of topological methods in logic: \emph{Is there a Reutenauer-type result for first-order logic over arbitrary structures? }

\vspace{1em}
In this paper we have discussed several examples of topological methods in logic and computer science, highlighting their duality theoretic nature. However, there are topological methods in logic which have been successfully developed and applied, but for which no duality theoretic explanation is available so far. An appealing example is the theory of limits of schema mappings as developed in database theory by Kolaitis and his collaborators \citep{Kolaitis2018}. Understanding these tools and results from a duality theoretic perspective may yield new useful insights and is an exciting venue for future investigations. Another example are 0--1 laws in finite model theory, illustrating the limits of the expressive power of first-order logic over finite structures, see e.g.\ \citep{Fagin1976}. These are only some of the many opportunities for further development of the duality approach, which would contribute to unify the `structure' and `power' strands in theoretical computer science.

\vspace{1em}
One of the main themes of our present contribution has been the analysis of step-by-step constructions in logic, which yield \emph{free} objects on the algebra side and \emph{co-free} objects on the space side. Note that, even though the step-wise process of adding a layer of connectives yields a \emph{monad} in the (co)limit, the one-step functor is typically a \emph{comonad}. For instance, the functor on Boolean algebras which adds one layer of modality $\Diamond$ is a comonad, whose dual is the Vietoris monad on Boolean spaces.

The recent work of Samson Abramsky and his coauthors on comonads for model-theoretic games \citep{Abramsky2017b, AbramskyShah2018} is tightly related to this viewpoint. The connection between the comonadic approach and the duality one remains to be explored, and is an interesting avenue of research. In this direction, one may point out that the Ehrenfeucht-Fra{\"i}ss{\'e} comonad introduced by Abramsky and Shah arises as the density comonad for a certain (contravariant) realization functor from a category of primitive positive sentences into the category of structures. 

\vspace{1em}
Besides the inductive treatment of quantifiers, another important theme of this paper has been the lack of freeness of Lindenbaum-Tarski algebras of first-order theories. Indeed, we pointed out that this is one of the main obstacles to a satisfactory algebraic and duality theoretic approach to predicate logics.

Another place where the lack of freeness plays an important role is quantum information and computation, to which Samson Abramsky has greatly contributed. There, as recently observed by Abramsky, the lack of freeness (of certain Boolean subalgebras of partial Boolean algebras) can be regarded as an obstruction to classicality. In fact, in the presence of freeness, the Kochen-Specker theorem does not apply. See \citep{AB2020}. Interestingly, in this context, this obstruction represents a (quantum) advantage.

\vspace{1em}
We conclude with a question concerning a wider issue, which is instrumental in addressing the divide between structure and power, one of the main focuses of Samson Abramsky's recent research. 
A difference between general model theory and finite model theory which is often emphasised is the fact that the major structure theorems such as compactness, L{\"o}wenheim-Skolem, etc.\ do not carry over to the finite setting. Rossman's Finite Homomorphism Preservation Theorem is a major advance because it provides such a theorem which does persist in the finite setting. 
Another take on this would be to conjecture that \emph{topological variants of all the classical structure theorems} hold in the finite setting. A first result in this direction is Reiterman's theorem for finite algebras, which shows that Birkhoff's variety theorem has a finite variant once we topologize. 
\emph{In weaker logics of resources, as studied for example in finite model theory, is there a topological component missing at the level of the associated Lawvere theories/categorical semantics?}

\bibliographystyle{apalike}

\end{document}